\documentclass[
captions=nooneline, 
bibliography=totoc, 
numbers=noenddot, 
parskip=half, 
headings=normal, 
abstracton,
12pt 
]{scrartcl} 


\usepackage[T1]{fontenc} 
\usepackage[USenglish]{babel} 
\usepackage{graphicx} 
\usepackage{makeidx} 
\usepackage{remreset} 
\usepackage[nouppercase]{scrpage2} 
\usepackage{amsmath} 
\usepackage{amssymb} 
\usepackage[
thmmarks, 
amsmath, 
hyperref 
]{ntheorem} 
\usepackage{bm} 
\usepackage{bbm} 
\usepackage{enumitem} 
\usepackage[hang]{subfigure} 
\usepackage{wrapfig} 
\usepackage{tabularx} 
\usepackage{multirow}
\usepackage{dcolumn} 
\usepackage{booktabs} 
\usepackage{listings} 
\usepackage{psfrag} 
\usepackage{multicol}
\usepackage{longtable}
\usepackage{ltxtable}
\usepackage{supertabular}
\usepackage[table]{xcolor}
\usepackage{lscape}
\usepackage{caption}
\usepackage{rotating}
\usepackage{color}
\usepackage{chicago}
\usepackage{caption}
\usepackage{textgreek}

\textwidth6in
\textheight8.5in
\topmargin-0.25in
\oddsidemargin0in
\evensidemargin0in

\linespread{1.1}

\usepackage{appendix}

\usepackage[
setpagesize=false, 
pdfborder={0 0 0}, 
pdfpagemode=UseOutlines, 
]{hyperref} 

\newcommand\mytoday{\number\year-\ifcase\month\or 01\or 02\or 03\or 04\or 05\or 06\or 07\or 08\or 09\or 10\or 11\or 12\fi-\ifcase\day\or 01\or 02\or 03\or 04\or 05\or 06\or 07\or 08\or 09\or 10\or 11\or 12\or 13\or 14\or 15\or 16\or 17\or 18\or 19\or 20\or 21\or 22\or 23\or 24\or 25\or 26\or 27\or 28\or 29\or 30\or 31\fi} 
\makeatother
\setkomafont{sectioning}{\normalcolor\bfseries} 
\pagestyle{scrheadings} 
\clearscrheadfoot 
\chead[]{\headmark} 
\cfoot[\pagemark]{\pagemark} 
\automark[section]{subsection} 
\setcounter{secnumdepth}{3} 
\setkomafont{captionlabel}{\bfseries} 
\newcolumntype{d}[2]{D{.}{.}{#1.#2}} 
\setlength{\heavyrulewidth}{0.4pt} 
\setlength{\lightrulewidth}{0.4pt} 
\setlength{\cmidrulewidth}{0.4pt} 
\belowbottomsep=-4pt 
\lstloadlanguages{[GNU]C++} 
\lstset{
language=[GNU]C++, 
numbers=left, 
numberstyle=\tiny, 
numbersep=8pt, 
tabsize=2, 
showstringspaces=false, 
basicstyle=\footnotesize\ttfamily, 
keywordstyle=, 
commentstyle=, 
breaklines=true, 
captionpos=b 
}
\makeatletter
\renewcommand{\p@enumii}[1]{\theenumi(#1)}
\makeatother

\theoremstyle{break} 
\theoremheaderfont{\bfseries}
\theorembodyfont{\itshape}
\theoremseparator{}
\newtheorem{definition}{Definition}[section] 

\newtheorem{theorem}[definition]{Theorem}

\newtheorem{remark}[definition]{Remark}
\newtheorem{example}[definition]{Example}
\theoremstyle{nonumberbreak} 
\theoremsymbol{$\Box$}
\newtheorem{proof}{Proof}


\newcommand*{\IR}{\mathbb{R}}

\newcommand{\bmu}{\mbox{\boldmath $\mu$}}
\newcommand{\btheta}{\mbox{\boldmath $\theta$}}

\def\vph{\varphi}
\def\p{\partial}
\def\oo{\infty}
\def\eps{\epsilon}
\def\s{\sigma}
\def\S{\Sigma}
\def\E{{\rm E}\,}
\def\x{{\bf x}}
\def\half{{\textstyle {1\over 2}}}
\def\rt#1{\sqrt{#1}\,}

\bibliographystyle{chicago}

\makeindex

\begin{document}

\title{Simplified Pair Copula Constructions --- Limits and Extensions}
\author{Jakob St\"ober \thanks{Center for Mathematical
Sciences, Technische Universit\"at M\"unchen, Germany. Corresponding author email: \texttt{stoeber@ma.tum.de}}  \and Harry Joe
\thanks{Department of Statistics, University of British Columbia, Canada.} \and Claudia
Czado\footnotemark[1] }
\maketitle		
\begin{abstract}
{\footnotesize So called pair copula constructions (PCCs), specifying multivariate
distributions only in terms of bivariate building blocks (pair copulas),
constitute a flexible class of dependence models. To keep them tractable for
inference and model selection, the simplifying assumption that copulas of
conditional distributions do not depend on the values of the variables which
they are conditioned on is popular.  In this paper, we show for which
classes of distributions such a simplification is applicable, significantly
extending the discussion of \shortciteN{haff2010b}. In particular, we show
that the only Archimedean copula in dimension $d\geq4$ which is of the
simplified type is 
that based on the gamma Laplace
transform or its extension, while the Student-t copula is the only one arising from a scale mixture of Normals.
Further, we illustrate how PCCs can be adapted for situations where conditional copulas depend on values which are conditioned on.}
\end{abstract}	

{\bf Keywords:} archimedean copula; elliptical copula; pair copula construction; conditional distribution
	
\section{Introduction}

Growing capabilities to store large data sets and increasing computing power for their computational analysis substantially increased the demand to develop statistical methodology for many dependent variables over the last years.
The properties of high-dimensional distributions in general, however, remain hard to understand intuitively. 
Given that, the early focus was on the multivariate Normal distribution, which allows to describe the whole dependence structure by specifying bivariate correlations. These linear dependencies between just two variables remain easy to understand and communicate. 
But more recent developments, in particular in econometrics (c.f.~\citeN{longin2001}, \citeN{ang2002b}), showed that many data sets in higher dimensions exhibit more complex dependence structures. 

To deal with these challenging structures and keeping the model understandable at the same time, a popular technique is to sequentially decompose the joint distributions into bivariate building blocks by conditioning. These are then modeled using bivariate copulas of which many parametric families are well studied, see for example the monographs by \citeN{Joe} and \citeN{Nelsen}. Distributions arising from such a decomposition are called pair copula constructions (PCCs).
The question for which classes of multivariate models and under which assumptions such a sequential conditioning is possible, however, remained unsolved. 
The only publication in this direction is \shortciteN{haff2010b}, providing first illustrative examples.

In this paper, we fill in this gap by classifying multivariate distributions in terms of the dependence properties of their conditional distributions (see Figure \ref{overview_graphic}). Special attention will be paid to the \textbf{simplifying assumption} that the copulas corresponding to conditional distributions are constant irrespective of the values of variables that  they are conditioned on. This assumption is usually made to keep model selection and inference tractable, and we will show for which multivariate distributions it holds.
Furthermore, we discuss how constructions based on simplified PCCs can be extended when the simplifying assumption is not applicable.

The remainder is structured as follows.
Section \ref{pccs} introduces PCCs. In Section \ref{archimedean} and \ref{elliptical}, we demonstrate which Archimedean and elliptical copulas are simplified PCCs.
Section \ref{non-simplified} provides the remaining examples for the classification in Figure \ref{overview_graphic} and considers the increased flexibility gained by weakening the simplifying assumption. Section \ref{conclusion} concludes with an outlook to areas of future research.
\begin{figure}[!ht]
\centering
\vspace{1cm}
\includegraphics[width=1\textwidth]{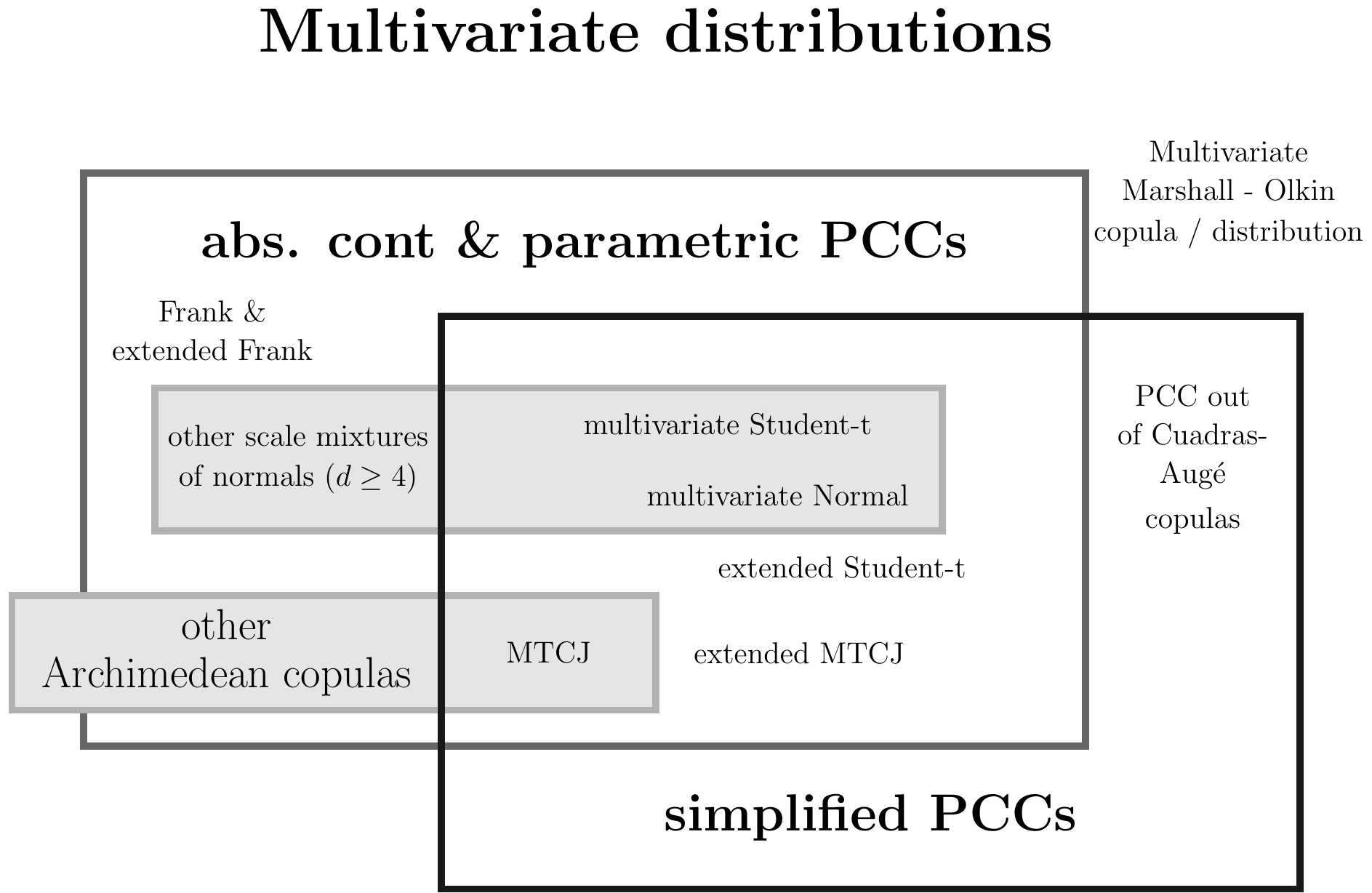}\caption{Overview of the PCC classes which are discussed including their relations and examples.}
\label{overview_graphic}
\end{figure}

\section{Pair copula constructions}\label{pccs}

A copula is defined as the cumulative distribution function (cdf) of a
multivariate distribution with uniform$(0,1)$ margins. These have become of
great importance in multidimensional dependence modeling, since the theorem
of \citeN{sklar} allows to separate the specification of a $d$-dimensional
distribution $F$ into specifying the univariate marginal distribution
functions $F_i$, $i=1,\ldots,d$, and a copula $C$, i.e., to find $C$ such that 
\begin{equation*}
F_{1:d}(x_1,\ldots,x_d)=C_{1:d}(F_1(x_1),\ldots,F_d(x_d)).
\end{equation*}
Here, we use the abbreviation $i:j:=(i,i+1,\ldots,j)$ for $i<j$.
In particular bivariate copula families are well studied, and every multivariate distributions can be constructed from marginal distributions and two-dimensional copulas by sequential conditioning, a so-called pair-copula construction.
While different principles for PCCs such as regular vines (R-vines) (\citeANP{bedford2001} \citeyear{bedford2001,bedford2002}) or non-Gaussian directed acyclic graphs (DAGs) \shortcite{bauer2011} have been investigated over the years, the underlying basic idea is always that of \citeN{joe1996}:

Let us consider three random variables (rvs) $X_1$, $X_2$ and $X_3$ with corresponding cumulative distribution functions (cdfs) $F_1, F_2$ and $F_3$, then we can first construct a joint distribution of $(X_1,X_2)$ and of $(X_3,X_2)$ by assigning copula functions $C_{12}$ and $C_{32}$, respectively:
\begin{equation}
\label{eq:construction1}
\begin{split}
F_{12}(x_1,x_2)&:=C_{12}\left(F_1(x_1),F_2(x_2)\right)\\
F_{32}(x_3,x_2)&:=C_{32}\left(F_1(x_3),F_2(x_2)\right)\\
\end{split}
\end{equation}
These bivariate distributions can be combined to a three-dimensional distribution by assigning a conditional copula $C_{13\vert 2}$,
\begin{equation}
\label{eq:construction2}
F_{123}(x_1,x_2,x_3):=\int_{-\infty}^{\infty} C_{13\vert 2}(F_{1\vert 2}(x_1\vert x_2), F_{3\vert 2}(x_3\vert x_2);x_2) \  dF_2(x_2).
\end{equation}
For every cdf $F_{123}$, Sklar's theorem implies the existence of a bivariate copula that couples $F_{1\vert 2}(\cdot \vert x_2)$ and $F_{3\vert 2}(\cdot \vert x_2)$ for every $x_2$.
The generalization of this principle to arbitrary dimensions is obvious and the conditional copulas corresponding to such a construction can be determined for every multivariate distribution (c.f.~\citeN{patton2006} for a discussion of conditional copulas).

To make PCCs tractable for inference and model selection however, further assumptions regarding the conditional copulas have to be made. \\
We will call a multivariate distribution an \textbf{absolutely continuous and parametric} PCC if all bivariate copula families occurring in the construction have densities with a parameter vector. In this case, it will be possible to express the likelihood function of the joint copula model in terms of the bivariate densities and to make inference about the finite-dimensional parameter based on this expression.
One example for this would be the decomposition of a $d$-dimensional density $f$ with marginal densities $f_k$, $1\leq k \leq d$, as a C-vine copula (see \shortciteN{aas2009}): with univariate parameters $\theta_k$ and copula parameters $\theta_{j,j+i}$, and $\btheta=(\theta_1,\ldots,\theta_d;\theta_{j,j+i},j=1,\dots,d-1, i=1,\dots,d-j)$,
\begin{equation*}
\begin{split}
 f_{1:d}(x_1, & \dots,  x_d; \btheta)=\prod_{k=1}^d f_k(x_k; \theta_k)\ \cdot \\
& \prod_{j=1}^{d-1}\prod_{i=1}^{d-j} c_{j,j+i\vert 1:(j-1)}\left( F_{j \vert 1:(j-1)}(x_j\vert \textbf{x}_{1:(j-1)}),F_{j+i \vert 1:(j-1)}(x_{j+i}\vert \textbf{x}_{1:(j-1)});\theta_{j,j+i}  \right).
\end{split}
\end{equation*}
The above multivariate density is valid for a \textbf{simplified} PCC
because all bivariate copulas occurring in the constructions do not depend
on the variables that are conditioned on, e.g.,
\begin{equation*}
C_{13\vert2}(\cdot,\cdot;x_2) = C_{13\vert2}(\cdot,\cdot).
\end{equation*}
This \textbf{simplifying} assumption reduces the specification of a PCC to choosing bivariate copula families and their parameters. This means that the potentially complex dependence between variables that are conditioned on and the copula functions can be neglected, and model selection can be performed more easily.
Also other techniques for PCCs such as for R-vines (stepwise estimation in \citeN{haff2010} and \shortciteN{dissmann2011}, inference as in \shortciteN{aas2009} or Bayesian techniques as in \citeN{min2010}) extend to the non-simplified case only if the dependence on the variables that are conditioned on is of a simple form (we discuss this in Section \ref{non-simplified}, see in particular Example \ref{ex:frank}).

\section{Archimedean copulas}\label{archimedean}
In this section, we characterize the Archimedean copulas that  are simplified PCCs.
A $d$-dimensional Archimedean copula is given by 
\begin{equation}
C_{1:d}(u_1,\dots,u_d)=\varphi\Big(\sum_{j=1}^d \varphi^{-1}(u_j) \Big),
\end{equation} 
for an Archimedean generator function $\varphi \in \mathcal{L}_d$. Here, $\mathcal{L}_d$ is the class of functions $\varphi : [0,\infty) \mapsto [0,1]$, which are strictly decreasing on $[0,\inf\{s, \varphi(s)=0 \}]$ (with $\varphi(0)=1$, $\varphi(\infty)=0$) and differentiable on $[0,\infty)$ up to order $d-2$, such that $(-1)^j \varphi^{(j)} \geq 0, j=1,\dots,d-2$, and where further $(-1)^{d-2} \varphi^{(d-2)}$ is non-increasing and convex (\citeN{mcneil2009}, \citeN{Joe}, \citeN{mueller2005}). 
These copulas have the appealing property that the conditional distribution function is given in a simple analytical form which facilitates the study of their properties. 
Given a Gamma random variable with shape parameter $p$, rate parameter $b$, density
\begin{displaymath}
f_{\Gamma}(x;b,p) = \frac{b^p}{\Gamma(p)}x^{p-1}e^{-b x},
\end{displaymath}
mean $p/b$ and variance $p/b^2$, 
the Laplace transform (LT) is
\begin{displaymath}
\varphi_{\Gamma}(s;\theta)=\int_0^\infty e^{-sx} f_{\Gamma}(x;1/\theta,1/\theta) dx=(1+\theta s)^{-1/\theta}, \ \ \theta \geq 0.
\end{displaymath}
The Archimedean copula corresponding to this LT as generator function is called MTCJ\footnote{This copula is also called Clayton copula due to its appearance in \cite{clayton1978}. It is the copula of the multivariate Pareto distribution \cite{mardia1962} and of the multivariate Burr distribution \cite{takahasi1965}. It was first mentioned as a multivariate copula in \citeN{cook1981} and as a bivariate copula in \citeN{kimeldorf1975}. The extension to negative dependence was given in \citeN{Genest1986} for the bivariate case and in \citeN[pp. 157-158]{Joe} for the multivariate case. Since many properties were discovered studying the corresponding distribution function and \citeN{cook1981} mentioned its general form we call it MTCJ copula.} copula. 
From the form of the conditional distributions derived in \citeN{takahasi1965} it is obvious that the copulas corresponding to bivariate conditional margins of the MTCJ copula are again MTCJ copulas. The above gamma LT family extends to Archimedean generators (for the {\it generalized} MTCJ family)
\begin{equation}\label{mtcj_generator}
\varphi_{\Gamma}(s;\theta)=\left(1+\theta \cdot s \right)_+^{-1/\theta}, \ \ \theta \geq -\frac{1}{d-1}, \ \ (x)_+=\max\{0,x\}.
\end{equation}
The conditional generator (c.f. \citeN{mesfioui2008}) when conditioning on $m$ variables becomes
\begin{equation}\label{mtcj_cond_generator}
\varphi_{m}(s;\theta)=\left(1+\frac{s \theta}{m \theta +1}\right)_+^{- (m \theta +1) / \theta}.
\end{equation} 
For $\theta \geq 1/(d-1)$, $\theta/(m\theta +1) \geq 1/(d-1-m)$, so the parameter range is consistent, and the (generalized) MTCJ copula is a simplified PCC where all bivariate building blocks are MTCJ with corresponding choice of parameters. 
In fact, under a weak regularity assumption, it is the only multivariate Archimedean copula that constitutes a simplified PCC.

\begin{theorem}\label{theorem:archimedean}
A $d$-dimensional Archimedean copula with 
a generator that is twice continuously differentiable on the set where
it is positive
is a simplified PCC if and only if its generator is in the family (\ref{mtcj_generator}).
\end{theorem}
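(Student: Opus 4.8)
The ``if'' direction is essentially the computation preceding the theorem: for $\varphi_{\Gamma}(\cdot\,;\theta)$ with $\theta\ge-1/(d-1)$, the conditional generator formula (\ref{mtcj_cond_generator}) together with the parameter-consistency remark shows that every conditional copula occurring in the C-vine built from the (generalized) MTCJ copula is again a (generalized) MTCJ copula with a parameter independent of the conditioning values, so it is a simplified PCC. The work is in the converse. Suppose $C_{1:d}$ is Archimedean with generator $\varphi\in\mathcal{L}_d$ that is $C^2$ on $\{s:\varphi(s)>0\}$ and is a simplified PCC with respect to some (equivalently, by exchangeability, any) regular vine. Since $d\ge3$, the second tree of that vine has at least one edge, namely a conditional copula with a \emph{single} conditioning variable, which by exchangeability we may take to be $C_{2,3\mid1}$; it depends only on the trivariate margin $(U_1,U_2,U_3)$, again Archimedean with generator $\varphi$. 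By the conditional-generator structure (cf.\ \citeN{mesfioui2008}), the conditional copula of $(U_2,U_3)$ given $U_1=u_1$ is Archimedean with generator $\psi_t(s)=\varphi'(t+s)/\varphi'(t)$, where $t=\varphi^{-1}(u_1)$ ranges over the interior of the interval on which $\varphi$ is strictly decreasing. The simplifying assumption says all $\psi_t$ generate the same copula, i.e.\ $\psi_t(s)=\psi_{t_0}(a(t)s)$ for a fixed $t_0$ and some $a(t)>0$ with $a(t_0)=1$. Writing $g:=-\varphi'$, which by the $\mathcal{L}_d$ conditions is positive, non-increasing and convex on $\{\varphi>0\}$, this is the functional equation
\begin{equation}\label{eq:fe_arch}
\frac{g(t+s)}{g(t)}=\frac{g(t_0+a(t)s)}{g(t_0)}.
\end{equation}

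\textbf{Solving the functional equation.} The $C^2$ hypothesis makes $L:=\log g$ twice continuously differentiable on $\{\varphi>0\}$. Differentiating (\ref{eq:fe_arch}) in $s$ at $s=0$ gives $a(t)=L'(t)/L'(t_0)$, so $a\in C^1$. Differentiating the logarithm of (\ref{eq:fe_arch}) in $t$ and eliminating $a(t_0+a(t)s)$-terms with the $s$-derivative, I obtain
\begin{equation*}
L'(t+s)\Bigl(1-\frac{a'(t)}{a(t)}\,s\Bigr)=L'(t),
\end{equation*}
which says that $1/L'(v)$ is an affine function of $v$ on $v\ge t$; a priori its slope and intercept depend on $t$, but since $1/L'(v)$ does not involve $t$ these coefficients must be constants, so $1/L'(v)=\alpha+\beta v$ globally. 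Integrating twice: for $\beta\ne0$ one gets $g(v)=C(\alpha+\beta v)^{1/\beta}$, hence $\varphi(v)=D-\tfrac{C}{1+\beta}(\alpha+\beta v)^{(1+\beta)/\beta}$; the branch $\beta=0$ gives $\varphi(v)=D+Ce^{\,v/\alpha}$. Imposing the normalization $\varphi(0)=1$ and $\varphi\to0$ at the right endpoint of its support, and using the one free rescaling $\varphi(\cdot)\mapsto\varphi(c\,\cdot)$ of an Archimedean generator to fix one remaining constant, these collapse to $\varphi(s)=(1+\theta s)_+^{-1/\theta}$, with the $\beta=0$ branch yielding the limiting case $\varphi(s)=e^{-s}$.

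\textbf{Admissible parameters and conclusion.} It remains to note that $\varphi_{\Gamma}(\cdot\,;\theta)\in\mathcal{L}_d$ exactly when $\theta\ge-1/(d-1)$: for $\theta>0$ it is the completely monotone gamma Laplace transform, while for $-1/(d-1)\le\theta<0$ the $d$-monotonicity requirements defining $\mathcal{L}_d$ are checked directly (this is the standard characterization of the negative-dependence extension of the MTCJ generator). This is precisely the family (\ref{mtcj_generator}), and combining with the ``if'' direction proves Theorem \ref{theorem:archimedean}.

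\textbf{Main obstacle.} The delicate part is the functional-equation step. When $\theta<0$ the generator $\varphi$ is supported on a bounded interval, so $t$, $t+s$ and $t_0+a(t)s$ must be kept inside moving subintervals and one has to verify that $a(t)>0$ and that the affine representations of $1/L'$ for different $t$ overlap; and one must discard the degenerate solutions of $1/L'(v)=\alpha+\beta v$, in particular $L'\equiv0$ (constant $g$), which corresponds to the linear generator $(1-s)_+$ and is excluded for $d\ge3$ by membership in $\mathcal{L}_d$. The verification that $\varphi_{\Gamma}(\cdot\,;\theta)\in\mathcal{L}_d\iff\theta\ge-1/(d-1)$ is routine but technical and is best cited.
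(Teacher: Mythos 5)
Your proposal is correct and follows essentially the same route as the paper's proof in Appendix \ref{appendix:archimedean}: you derive the functional equation $g(t+s)/g(t)=g(t_0+a(t)s)/g(t_0)$ for the tilted generator $h=-\varphi'$, differentiate to get a first-order ODE forcing $h$ to be a power of an affine function, and integrate back to the family (\ref{mtcj_generator}). The only (cosmetic) differences are that you work with $\log g$ and anchor the equation at an arbitrary interior point $t_0$, which cleanly subsumes the paper's separate $\epsilon$-shift treatment of the case where $h(0)$ or $h'(0)$ is infinite.
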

\begin{proof}
Appendix \ref{appendix:archimedean}.
\end{proof}
Note that the differentiability condition for the generator is satisfied for all Archimedean copulas if $d \geq 4$.
This adds a further aspect making the MTCJ copula unique\footnote{The MTCJ copula also is the only Archimedean copula invariant under truncation, in the sense that for a rv $\textbf{U}\sim C$, C also is the copula of $\textbf{U}\vert \textbf{U}\leq \textbf{a}$, $\textbf{a} \in \left[0,1\right]^{\text{dim}(\textbf{U})}$ \cite{ahmadijavid2009}.} among Archimedean copulas.
Using different parameters than obtained through Equation (\ref{mtcj_cond_generator}), PCCs with MTCJ copulas with freely chosen parameters as building blocks naturally generalize the MTCJ copula to what we call an \textbf{extended MTCJ} copula.

\section{Elliptical copulas}\label{elliptical}
In this section, we characterize the elliptical copulas that have all conditional distributions in location-scale families. We show that not all elliptical copulas are simplified PCCs and characterize the scale mixtures of Normals which are of the simplified type.
By referring to a $d$-dimensional elliptical copula we mean a copula arising from an elliptical distribution such as the multivariate Normal or Student-$t$ distribution. A multivariate distribution is elliptical if its characteristic function has the form
\begin{displaymath}
\text{\textphi}_\textbf{X}(\textbf{t};\boldsymbol \mu, \Sigma)=\Psi(\textbf{t}' \Sigma \textbf{t}) e^{i \textbf{t}'{\boldsymbol \mu}},
\end{displaymath}
for $\Psi: \IR_0^+ \mapsto \IR$, $\bmu \in \IR^d$, and positive definite $\Sigma \in \IR^{d \times d}$.
If the distribution has a density, this implies that it is given by 
\begin{displaymath}
f_{\textbf{X}}(\textbf{x})= \lvert \Sigma \rvert^{-1/2} g\Big((\textbf{x}-\bmu)' \Sigma^{-1} (\textbf{x} - \bmu) \Big), 
\end{displaymath}
for a generator function $g:\IR^+_0 \mapsto \IR^+_0$, which can be uniquely determined from $\Psi$, see \shortciteN{cambanis1981}. For the two examples we mentioned, the generator functions have the form
\begin{displaymath}
g_{\text{Gauss},n}(t)=\frac{1}{(2\pi)^{n/2}} e^{-t/2},  \ \ \ g_{\text{Student}-t,n,\nu}(t)=\frac{\Gamma\left(\frac{\nu +n}{2}\right)}{\Gamma\left(\frac{\nu}{2}\right) (\nu\pi)^{n/2}} \cdot \left( 1+\frac{t}{\nu}\right)^{-\left(\nu +n\right)/2},
\end{displaymath} and lead to simplified PCCs. 

\vspace{.5cm}
\begin{theorem}\label{elliptical1}
The multivariate Gaussian distribution and the multivariate Student-$t$ distribution are PCCs of the simplified form.
\end{theorem}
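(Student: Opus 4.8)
The plan is to verify the defining property of a simplified PCC directly for each of the two families: every bivariate copula $C_{ij\mid D}$ that couples the conditional margins $F_{i\mid D}(\cdot\mid\mathbf{x}_D)$ and $F_{j\mid D}(\cdot\mid\mathbf{x}_D)$ (for $\{i,j\}\cap D=\emptyset$) must be the same for all values $\mathbf{x}_D$. The point I would use throughout is that a copula is invariant under strictly increasing coordinatewise transformations of its margins, so any way in which a conditional distribution varies with $\mathbf{x}_D$ that is confined to the univariate margins --- a location shift or a common scaling --- is immaterial for the conditional copula.

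First, the Gaussian case. If $\mathbf{X}\sim N_d(\bmu,\Sigma)$, the standard conditioning formulas give that $(X_i,X_j)\mid\mathbf{X}_D=\mathbf{x}_D$ is bivariate normal with a mean that is affine in $\mathbf{x}_D$ but with covariance matrix $\Sigma_{\{i,j\}\{i,j\}\cdot D}=\Sigma_{\{i,j\}\{i,j\}}-\Sigma_{\{i,j\}D}\Sigma_{DD}^{-1}\Sigma_{D\{i,j\}}$ that does \emph{not} depend on $\mathbf{x}_D$. Hence the correlation of this bivariate normal equals the partial correlation $\rho_{ij;D}$, a constant, and since the copula of a bivariate normal is determined by its correlation alone, $C_{ij\mid D}$ is independent of $\mathbf{x}_D$. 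This holds for every conditioning set $D$, so any R-vine decomposition of the Gaussian copula is simplified, with Gaussian pair copulas as building blocks.

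Second, the Student-$t$ case. I would use either the scale-mixture representation $\mathbf{X}=\bmu+\sqrt{W}\,\mathbf{Z}$ with $\mathbf{Z}\sim N_d(\mathbf{0},\Sigma)$ and $W$ an independent inverse-gamma variable, or the classical result that, writing $q=|D|$, the law of $(X_i,X_j)\mid\mathbf{X}_D=\mathbf{x}_D$ is bivariate $t$ with $\nu+q$ degrees of freedom, mean affine in $\mathbf{x}_D$, and dispersion matrix $\frac{\nu+Q(\mathbf{x}_D)}{\nu+q}\,\Sigma_{\{i,j\}\{i,j\}\cdot D}$, where $Q(\mathbf{x}_D)=(\mathbf{x}_D-\bmu_D)'\Sigma_{DD}^{-1}(\mathbf{x}_D-\bmu_D)$. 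The scalar $\frac{\nu+Q(\mathbf{x}_D)}{\nu+q}$ does depend on $\mathbf{x}_D$, but it multiplies the \emph{entire} dispersion matrix, so the correlation matrix it induces is still the fixed partial correlation and the margins merely acquire a common scale factor. Since the copula of an elliptical distribution is unchanged under a positive scalar rescaling of the dispersion matrix and under the location shift, $C_{ij\mid D}$ is the bivariate $t$-copula with $\nu+q$ degrees of freedom and correlation $\rho_{ij;D}$, again independent of $\mathbf{x}_D$; so every R-vine decomposition is simplified, now with $t$ pair copulas of appropriately shifted degrees of freedom.

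The only real subtlety is the $t$ case: unlike for the normal, the conditional distribution genuinely changes with $\mathbf{x}_D$ because its scale does, and the crux of the argument is to observe that this change reduces to a single common scalar factor that cancels at the level of the copula --- i.e. to use invariance of the $t$-copula under rescaling of $\Sigma$. Once that is in place, both statements follow immediately, since in a PCC the joint copula is completely determined by the collection $\{C_{ij\mid D}\}$ and each member has been shown to be constant in the conditioning values.
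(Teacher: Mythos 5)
Your proposal is correct and follows essentially the same route as the paper: the Gaussian case via constancy of the conditional covariance, and the Student-$t$ case via the classical conditional-distribution formula (the paper cites it as Lemma 2.2 of Nikoloulopoulos et al.), observing that the $\mathbf{x}_D$-dependence enters only through a location shift and a common scalar factor $\frac{\nu+Q(\mathbf{x}_D)}{\nu+q}$ multiplying the whole dispersion matrix, which cancels at the level of the copula. The resulting pair copula --- bivariate $t$ with $\nu+|D|$ degrees of freedom and partial correlation $\rho_{ij;D}$ --- matches the paper's Appendix B.1 exactly.
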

For the Gaussian distribution this is quite obvious since all conditional
distributions are again Gaussian and do depend on the values of the
variables that they are conditioned on only through their mean, i.e., they all are in the same location family. For the Student-$t$ distribution the covariances also depend on these values, but only through a scaling factor such that all conditional distributions remain in the same location-scale family. Since changes of location and scale do not affect the copula of a multivariate distribution, this implies that the Student-$t$ distribution is a simplified PCC. For clarification, we derive the explicit form of the copulas corresponding to bivariate conditional distributions in Appendix \ref{appendix:student1}.
Just as for the MTCJ copula, changing the degrees of freedom in the bivariate building blocks of the PCC leads to a natural extension of the multivariate Student-t distribution in which different bivariate conditional margins can have different degrees of freedom.

While the copulas corresponding to these distributions are the most popular examples they also have a unique position within the class of elliptical distributions as the following theorems show.
\vspace{.5cm}
\begin{theorem}\label{elliptical_theorem}
Let us assume that the generator function $g(\cdot)$ of the density of a $d$-dimensional elliptical distribution is differentiable. 
Then, the conditional distributions remain within the same location-scale family for all values of the variables that are conditioned on, if and only if
\begin{itemize}
\item[(a)] the support of $g$ is $\IR$ and the distribution is the multivariate Student-t (or Pearson type VII) distribution or in its limiting case the multivariate Normal distribution or
\item[(b)] $g$ has compact support and $g(t;\zeta)=(1-t)_+^\zeta$, for $\zeta > 1$ up to rescaling (the Pearson type II distribution).
\end{itemize}
\end{theorem}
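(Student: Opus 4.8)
The plan is to translate the location--scale condition on the conditional distributions into a functional equation for the generator $g$, solve that equation, and then retain only the solutions that are genuine $d$-dimensional elliptical density generators of the required smoothness. Writing $\mathbf X=(\mathbf X_1,\mathbf X_2)$ and completing the square in $(\mathbf x-\bmu)'\Sigma^{-1}(\mathbf x-\bmu)$, the conditional law of $\mathbf X_1$ given $\mathbf X_2=\mathbf x_2$ is elliptical with the usual linear conditional mean, with scale matrix the Schur complement $\Sigma_{11}-\Sigma_{12}\Sigma_{22}^{-1}\Sigma_{21}$ -- which does \emph{not} depend on $\mathbf x_2$ -- and with generator $g_q(\cdot)=g(\cdot+q)$ up to normalisation, where $q=(\mathbf x_2-\bmu_2)'\Sigma_{22}^{-1}(\mathbf x_2-\bmu_2)$ ranges over $[0,s_0)$ with $s_0=\sup\{s:g(s)>0\}$. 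Since affine changes of variables, and rescaling $\Sigma$ by a positive scalar (equivalently, rescaling the argument of the generator), change neither the location--scale family nor the copula, a short computation shows that the conditionals stay within a single location--scale family -- for every choice of the conditioning block, the resulting equation being independent of the block sizes -- if and only if there are positive functions $a,b$ with $a(0)=b(0)=1$ such that
\begin{equation*}
g(t+q)=a(q)\,g\bigl(b(q)\,t\bigr)\qquad\text{for all admissible }t,q\ge 0. \tag{$\star$}
\end{equation*}
The content of the theorem is then to solve $(\star)$.

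Putting $t=0$ in $(\star)$ gives $a(q)=g(q)/g(0)$ (and $g(0)>0$); differentiating $(\star)$ in $t$ at $t=0$ gives $g'(q)=a(q)\,b(q)\,g'(0)$, hence $b(q)=g'(q)\,g(0)/(g(q)\,g'(0))$. Finiteness and positivity of $b$ force $g'$ to have a constant, nonzero sign, and since $g$ is integrable it must decay; hence $g'(0)<0$, $g$ is strictly decreasing, and in particular injective. Iterating $(\star)$ in the first argument and cancelling $g$ (by injectivity) yields the cocycle identity $b(q_1+q_2)=b(q_1)\,b\bigl(b(q_1)\,q_2\bigr)$, whose solutions with $b(0)=1$, under the differentiability hypothesis, are exactly $b(q)=(1-\beta q)^{-1}$ (equivalently, $(\star)$ reduces to the ODE $g\,g''=\kappa\,(g')^{2}$ with $\kappa=g(0)g''(0)/g'(0)^{2}$). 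Since $b(q)=(\log g)'(q)/(\log g)'(0)$, integrating gives $(\log g)'(t)=\alpha\,(1-\beta t)^{-1}$ with $\alpha=g'(0)/g(0)<0$, so $g(t)=g(0)\,e^{\alpha t}$ when $\beta=0$, and $g(t)=g(0)\,(1-\beta t)^{-\alpha/\beta}$ on its positivity interval (extended by $0$ if $\beta>0$) when $\beta\ne0$. I expect the regularity step -- squeezing enough smoothness out of $(\star)$ under only the differentiability assumption, and handling the boundary of the support -- to be the main obstacle.

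It then remains to keep the solutions that are valid generators of a $d$-dimensional elliptical density of the assumed smoothness. Integrability forces the exponents to be large enough; $\beta=0$ is the Gaussian (the limiting case of what follows); $\beta<0$ gives $g(t)\propto(1+|\beta|t)^{-\alpha/\beta}$, strictly positive on all of $[0,\infty)$ -- the Student-$t$, i.e.\ Pearson type VII, family -- which is case~(a); and $\beta>0$ gives $g(t)\propto(1-\beta t)_+^{\zeta}$ with compact support -- the Pearson type II family -- where $\zeta=-\alpha/\beta>1$ is precisely the condition for $g$, hence the density, to be differentiable at the boundary of its support, which is case~(b). For the converse one checks $(\star)$ on the nose: $e^{-(t+q)/2}=e^{-q/2}\,e^{-t/2}$, $\bigl(1+\tfrac{t+q}{\nu}\bigr)^{-m}=\bigl(1+\tfrac{q}{\nu}\bigr)^{-m}\bigl(1+\tfrac{t/\nu}{1+q/\nu}\bigr)^{-m}$, and $(1-t-q)_+^{\zeta}=(1-q)^{\zeta}\bigl(1-\tfrac{t}{1-q}\bigr)_+^{\zeta}$, so in each case every conditional distribution stays in the same family -- for the Gaussian and Student-$t$ this is also Theorem~\ref{elliptical1}.
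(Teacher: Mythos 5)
Your proposal is correct and follows essentially the same route as the paper's proof: both reduce the location--scale condition on the conditionals to the functional equation $g(t+q)=a(q)\,g(b(q)\,t)$ for the generator and solve it by differentiation to obtain the Pearson type VII / Gaussian / Pearson type II trichotomy, with the integrability and boundary-differentiability constraints fixing the admissible exponents. The only cosmetic difference is that the paper differentiates the equation in the conditioning argument at $0$ to reach the ODE $(1+\beta t)h'(t)=\alpha h(t)$ directly, whereas you first identify $b(q)=(\log g)'(q)/(\log g)'(0)$ and solve its cocycle identity; both variants rest on the same implicit smoothness of the scale function.
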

\begin{proof}
Appendix \ref{appendix:student2}.
\end{proof}
Here, case $(b)$ can be seen as an analogue to the extension of the MTCJ to negative dependence.

The proof that the Student-t distribution is a simplified PCC relied on the fact that in this case, conditioning only affects the location and scale of the distribution. Theorem \ref{elliptical_theorem} now shows that the t-distribution is the only elliptical distribution where this proof strategy is successful. Just as the MTCJ can be constructed using a Gamma mixture, also the Student-t distribution arises from the multivariate Normal distribution using a Gamma mixture for the square of the inverse scale parameter. For the distribution function of the MTCJ copula, we have
\begin{displaymath}
C_{MTCJ}(\textbf{u}_{1:d};\theta)=\int_0^\infty \prod_{i=1}^d \left[G_{MTCJ}(u_i;\theta) \right]^\alpha f_\Gamma(\alpha;1,1/\theta) d\alpha,
\end{displaymath}
where $G_{MTCJ}(\cdot;\theta) = \exp\left\{ -\left(\varphi_{MTCJ} \right\}^{-1} (\cdot;\theta ) \right)$ is a cdf on $\left[0,1\right]$, c.f. \citeN[p. 86]{Joe}, while 
\begin{displaymath}
f_{t,d}(\textbf{x}_{1:d}; R, \nu) =  (2\pi)^{-d/2} |R|^{-1/2} \int_0^\oo w^{d/2} \exp\{ -\half w\x'R^{-1}\x \} f_\Gamma(w; \nu/2, 2/\nu)\,dw,
\end{displaymath}
holds for the density of a multivariate Student-t distribution.
When only considering scale mixtures of Normals, we obtain the following theorem. 
\begin{theorem}\label{elliptical_theorem2}
Consider a $d$-dimensional scale mixture of Normals which is a simplified PCC
\begin{itemize}
\item[a)] in $d\geq4$ or
\item[b)] for all correlation matrices $\Sigma$, 
\end{itemize}
then it is a simplified PCC if and only if the mixing distribution is the Gamma distribution.
\end{theorem}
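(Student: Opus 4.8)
The plan is to exploit the mixture representation to turn the simplifying condition into a statement about the one-parameter family of mixing measures produced by conditioning, and then to solve a functional equation. Write the density of the $d$-dimensional scale mixture of Normals with correlation matrix $R$ as $f_{\mathbf X}(\mathbf x)=\kappa\,|R|^{-1/2}\int_0^\infty w^{d/2}\exp\{-\tfrac{w}{2}\mathbf x'R^{-1}\mathbf x\}\,dH(w)$ for a mixing distribution $H$ on $(0,\infty)$; this is elliptical with the (automatically smooth) generator $g(t)=\kappa\int_0^\infty w^{d/2}e^{-wt/2}\,dH(w)$. Note that Theorem~\ref{elliptical_theorem} cannot be invoked directly, since being a simplified PCC is a priori weaker than having all conditional laws in one location-scale family; the content of the theorem is precisely that, for scale mixtures of Normals, the simplifying assumption is nonetheless this strong. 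The first step is the conditioning computation: given $\mathbf X_D=\mathbf x_D$ with $|D|=m$, the conditional law of $\mathbf X_{-D}$ is again a scale mixture of Normals, with location $R_{-D,D}R_{D,D}^{-1}\mathbf x_D$, Schur-complement scale matrix (whose correlation matrix is free of $\mathbf x_D$), and mixing measure $dH_q(w)\propto w^{m/2}e^{-wq/2}\,dH(w)$, where the conditioning values enter only through $q:=\mathbf x_D'R_{D,D}^{-1}\mathbf x_D\in[0,\infty)$. Since changes of location and scale leave copulas invariant, the conditional copula depends on $\mathbf x_D$ only through the tilted mixing measure $H_q$; and since a PCC assembled from $\mathbf x_D$-free bivariate blocks has an $\mathbf x_D$-free full conditional copula, the simplifying assumption forces the copula of a scale mixture of Normals with fixed correlation structure and mixing measure $H_q$ to be the same for all $q\ge0$ --- in case (a) for an at-least-trivariate such copula (condition on one variable, $d-1\ge3$ remaining), in case (b) for the bivariate conditional copulas, whose partial correlations now range over all of $(-1,1)$.

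The second step is an identifiability lemma: if two centered scale mixtures of Normals with the same correlation matrix have the same copula, their mixing measures differ only by a deterministic rescaling of $W$. Equivalently, the coordinatewise transformation carrying one law to the other --- a single increasing map $T$, by the symmetry of centered elliptical margins --- must be linear; once $T$ is linear, equating the two elliptical densities forces $g_2(t)\propto g_1(t/c)$ at the level of generators, which is exactly the statement that $H_2$ is $H_1$ rescaled. The linearity of $T$ is where the dimension/correlation hypotheses enter: a single bivariate marginal transformation is underdetermined, but an at-least-trivariate elliptical law (case (a)) pins $T$ down through its many pairwise and linear-combination marginals, and a whole range of correlations for which $T$ must work simultaneously (case (b)) does the same. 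I expect this identifiability step --- and in particular identifying exactly how much structure ($d\ge4$ with one fixed $\Sigma$, versus one fixed $d\ge3$ with all $\Sigma$) is needed to exclude nonlinear $T$ --- to be the main obstacle.

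Granting Step 2, the conditional copulas being $q$-independent gives, for a reference value $q_0$, a scale factor $\lambda_q>0$ with $h_q(w)\propto \lambda_q^{-1}h_{q_0}(w/\lambda_q)$, while Step 1 gives $h_q(w)\propto e^{-w(q-q_0)/2}h_{q_0}(w)$. Writing $\psi=\log h_{q_0}$, taking logarithms, differentiating in $w$ and then in $q$ at $q=q_0$ eliminates both $\lambda$ and the normalization and yields $(w\psi'(w))'=\mathrm{const}$, hence $\psi(w)=aw+b\log w+\mathrm{const}$; integrability on $(0,\infty)$ forces $a<0$ and $b>-1$, so $h_{q_0}$, and therefore (via $dH\propto w^{-m/2}e^{wq_0/2}\,dH_{q_0}$) the original $H$, is a Gamma density. (If $H$ admits no density one runs the same argument on the Laplace transforms of the tilted measures.) Finally, the converse is immediate from the earlier results: a Gamma mixture of Normals is a multivariate Student-$t$ distribution --- with degrees of freedom twice the shape parameter, the rate being irrelevant for the copula --- and this is a simplified PCC by Theorem~\ref{elliptical1}. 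Combining the two directions proves the equivalence.
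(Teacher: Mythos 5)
Your overall architecture is the right one and matches the paper's in spirit: condition to get a tilted mixing measure $dH_q(w)\propto w^{m/2}e^{-wq/2}\,dH(w)$, argue that the simplifying assumption forces all the $H_q$ to be rescalings of one another, and then solve the resulting functional equation to characterize the Gamma law. Your Step 3 is essentially sound (it is a smooth-density variant of the paper's final step, which instead invokes the characterization of the Gamma as the unique distribution for which Laplace-transform tilting acts as a pure rescaling), and the converse via Theorem \ref{elliptical1} is fine. The problem is Step 2. The entire burden of the theorem rests on the claim that two (at least trivariate, or bivariate-for-all-correlations) scale mixtures of Normals with the same correlation matrix and the same copula must have mixing measures agreeing up to scale, and you do not prove this: the argument that an at-least-trivariate elliptical law "pins $T$ down through its many pairwise and linear-combination marginals" is a heuristic, not a proof, and you yourself flag this step as the main obstacle. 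As it stands the proposal reduces the theorem to an unproven identifiability lemma whose truth, in exactly the generality you need (one fixed $\Sigma$ in dimension $3$, say), is not obvious; indeed the role of the hypotheses $d\ge 4$ versus "all $\Sigma$" is precisely to supply enough constraints, and you have not exhibited them.

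The paper sidesteps any such global identifiability statement. It extracts from the copula density only \emph{local} necessary conditions: evaluating $c$ at $(0.5,\ldots,0.5)$ gives that $\E[V_t^{(d-1)/2}]/\E^{d-1}[V_t^{1/2}]$ is constant in $t$, and a second derivative at that point (which is where the off-diagonal entries of $\Sigma^{-1}$, i.e.\ the "all $\Sigma$" hypothesis in case (b), enter) or conditioning on two variables in $d\ge4$ (case (a)) gives one further moment identity. These two identities bootstrap recursively to $\E[V_t^m]=a^m(t)\E[V^m]$ for \emph{all} half-integer $m$, whence $\varphi_{V_t}(s)=\varphi_V(sa(t))$ and $V_t\stackrel{d}{=}a(t)V$ by uniqueness of the Taylor expansion of the Laplace transform; only then does the Gamma characterization apply. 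In other words, the paper proves exactly the scale-equivalence of the $H_q$ that you assume, but does so through explicit moment computations rather than through a copula-determines-generator lemma. To repair your proof you would either have to prove the identifiability lemma (which I expect is at least as hard as the paper's moment argument) or replace Step 2 by a moment extraction of the paper's type; without one of these the proposal is incomplete.
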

\begin{proof}
Appendix \ref{appendix:student3}.
\end{proof}
This contradicts the claim made in Example 4.1 of \citeN{haff2010} that all elliptical distributions are simplified PCCs.

Although this shows that the intersection between simplified PCCs and elliptical distributions is limited, it turns out that the deviation from the simplifying assumption can be ignored in many applications. In fact, the values of Kendall's $\tau$ corresponding to bivariate conditional distributions are independent of the variables that are conditioned. \shortciteN{cambanis1981} showed that the conditional correlation coefficient is equal to the partial correlation for elliptical distributions, and \shortciteN{lindskog2003} demonstrated that the relationship between the correlation coefficient $\rho$ and Kendall's $\tau$ for the Normal distribution
\begin{displaymath}
\tau = \frac{2}{\pi} \text{arcsin}(\rho),
\end{displaymath}
holds for all atom-free elliptical distributions.

\section{Non continuous and non-simplified PCCs}\label{non-simplified}

In order to complete the list of examples for the classification in Figure \ref{overview_graphic}, we need to find distributions which are not absolutely continuous but simplified PCCs and distributions which are neither absolutely continuous nor simplified PCCs. 
For the first case, let us consider a PCC with bivariate Cuadras-Aug\'e (which are a special case of the bivariate Marshall-Olkin (MO) copula, where the distribution is exchangeable) copulas as building blocks. Exempli gratia,
\begin{equation*}
C_{12}(u_1,u_2)=\min(u_1,u_2) \max(u_1,u_2)^{1-\alpha_{12}}, \ \ \alpha_{12} \in [0,1],
\end{equation*}
c.f.~\citeN[Chapter 1]{mai2012}. For the joint distribution of $(U_1,U_2,U_3)$, $F_{123}$, this implies that
\begin{equation*}
\begin{split}
F_{123}(u_1&,u_2,u_3)\\
&=\int_0^1 \min(F_{\alpha_{12}}(u_1\vert v_2),F_{\alpha_{32}}(u_3\vert v_2)) \max(F_{\alpha_{12}}(u_1\vert v_2),F_{\alpha_{32}}(u_3\vert v_2))^{1-\alpha_{13\vert2}} dv_2,
\end{split}
\end{equation*}
where the conditional cdf is given by \cite[Example 1.8]{mai2012}
\begin{equation*}
F_{\alpha}(u\vert v)=\begin{cases}
u (1-\alpha) v^{-\alpha} &u<v,\\
u^{1-\alpha} & u \geq v.
\end{cases}
\end{equation*}
Note that here $F_{123}$ is a copula.
A scatterplot of the bivariate $13$ margin of this distribution is shown in the left panel of Figure \ref{moplots} for illustration.

While this copula is of simplified form by construction, the multivariate MO copula (c.f.\ \cite[Chapter 3]{mai2012}), 
\begin{displaymath}
C_{MO}(\textbf{u}_{1:d})=\prod_{\emptyset \not = I \subset \{1,\ldots,d\}} \min_{k\in I} \left\{u_k^{\frac{\lambda_I}{\sum_{J:k\in J} \lambda_J}}\right\}, \ \ \ \lambda_I \geq 0, \ \ \sum_{J:k\in J} \lambda_J > 0 \ \ \forall k = 1,\ldots,d,
\end{displaymath} in contrast is neither a simplified nor an absolutely continuous PCC. We derive the copula of bivariate conditional distributions for the trivariate case in Appendix \ref{section:mo}, the cdf of the conditional copula of $U_1,U_2\vert U_3=u_3$ is plotted in the right hand panel of Figure \ref{moplots} for different values of the conditioned $U_3$. Note that, due to the non continuity of $F_{U_1\vert U_3}$ and $F_{U_2 \vert U_3}$ this copula is uniquely defined only outside the shaded areas.

\begin{minipage}[c]{\textwidth}
\begin{minipage}[c]{0.5\textwidth}
\includegraphics[width=\textwidth]{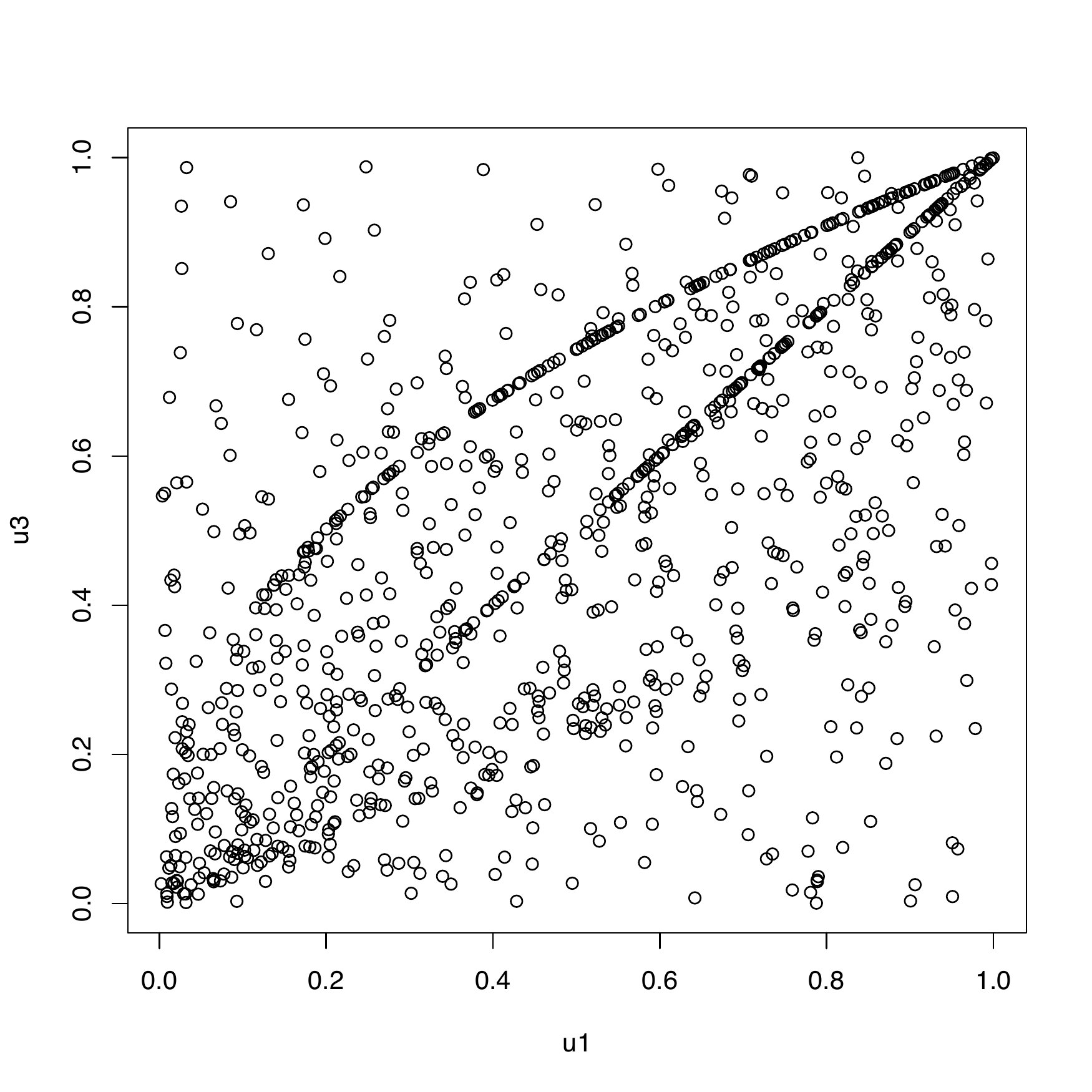}
\end{minipage}
\begin{minipage}[c]{0.5\textwidth}
\includegraphics[width=\textwidth]{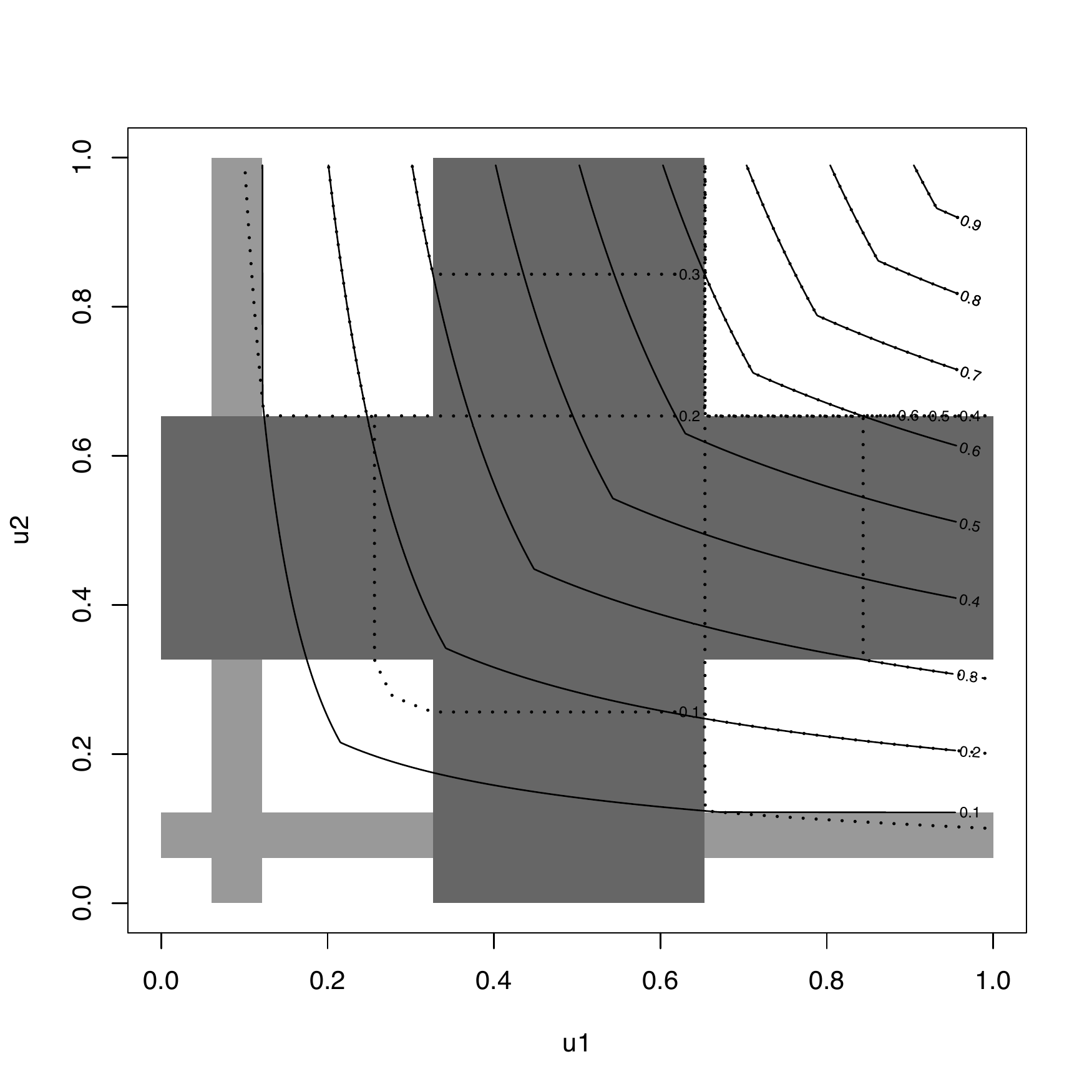}
\end{minipage}
\captionof{figure}{\textbf{left panel:} Scatterplot of the unspecified bivariate margin for a three-dimensional PCC of Cuadras-Aug\'e copulas with parameters $\alpha_{12}=0.7=\alpha_{13\vert2}$, $\alpha_{23}=0.3$.\\
\textbf{right panel:} Cdf of the bivariate conditional copulas of a trivariate MO copula with parameter $\lambda=2$ (see Appendix \ref{section:mo}). The cdf for $u_3=0.08$ is plotted solid (it is uniquely determined outside the light grey area), the dashed lines correspond to the cdf for $u_3=0.28$ (which is uniquely determined outside the dark grey area).}
\vspace{.5cm}
\label{moplots}
\end{minipage}

\subsection*{Weakening the simplifying assumption}

Whereas the case of PCCs which are not absolutely continuous will rarely be relevant in practice, there is a wide range of applications for distributions involving non-simplified PCCs.

Although the simplifying assumption sounds restrictive on the first glance, fairly general distributions can be obtained while keeping the simplified assumption for parts of the distribution.
Staying in the parametric and absolutely continuous framework, the parameters of otherwise simplified PCCs can be made depending on the values of variables that are conditioned on. In cases where there is an underlying economic assumption for how covariates should influence the dependence, this can be done in the form of parametric models. Examples for this were studied by for example \citeN{patton2006} and \shortciteN{bartram2007}, who considered models where the dependence parameter of the conditional copula depends on previous realizations of the time series, or \citeN{stoeber2011c} and \citeN{almeida2011}, who considered dependence of the parameter on an underlying Markov or AR(1) process, respectively. In particular, the Markov switching R-vine copula model of \citeN{stoeber2011c} implies that the copula of a multivariate time series at each point in time is given by a discrete mixture of simplified PCC's, which will usually be of non-simplified form as for the mixture of Normals.

When there is no a priori knowledge for how the dependence parameter should be influenced, non-parametric models as in \shortciteN{acar2011} can be applied. While this is fairly straightforward when conditioning on just one variable, it raises the question how interactions should be included when conditioning on multiple variables. For the elliptical and Archimedean distributions which we studied in Sections 3 and 4, we observe the following:

\begin{itemize}
\item For elliptical distributions, the Kendall's $\tau$ of conditional distributions does not depend on the values which are conditioned on. The distribution however can depend on these variables. Assuming zero means and correlations, and conditioning on realizations, $X_k=x_k, \ldots, X_d=x_d$ the conditional distribution will only depend on $a=\sum_{i=k}^d x_i^2$.
If the generator function $g_k(\cdot)$ is monotonely decreasing (as for the multivariate Normal or Student-t distribution, and generally if $k \geq 2$, see \cite[Section 4.9]{Joe}), this implies in particular that the conditional distribution only depends on $g_k(\x_{k:d}' \S_{k:d}^{-1} \x_{k:d})$.
For this reason, one might consider to make the dependence parameter of conditional copulas depend on the likelihood of observations that are conditioned on, for data where the distribution appears to be close to the elliptical family.
However, since the values of Kendall's $\tau$ must not be affected --- which are closely related to parameter values for most well known bivariate parametric families --- keeping the simplifying assumption will always be a close approximation in these cases.
\item For an Archimedean copula with generator function $\varphi$, we have observed that when conditioning on realizations  $X_k=x_k, \ldots, X_d=x_d$ the conditional distribution will only depend on $a=\sum_{i=k}^d \varphi^{-1}(x_i)$. In particular, this implies that it only depends on $\varphi(a)$, which is the cdf of $X_k,\ldots,X_d$, evaluated at $X_k=x_k, \ldots, X_d=x_d$. Thus, for data showing dependence behavior close to that of the Archimedean class, we recommend to consider analyzing dependence of the parameters of conditional copulas on this joint probability.
\end{itemize}
As a simple example for how the conditional copula can depend on the values of conditioning variables in the Archimedean case, let us consider the three-dimensional Frank copula. 
\begin{example}[Three-dimensional Frank copula]
\label{ex:frank}
Let $\textbf{U}_{1:3}$ have cdf $C_{Frank}(\textbf{u}_{1:3};\alpha)=\varphi_{F}\left( \varphi_F^{-1} (u_1;\alpha)+ \varphi_F^{-1} (u_2;\alpha)+ \varphi_F^{-1} (u_3;\alpha);\alpha\right)$ with LT of the logarithmic series distribution:
\begin{displaymath}
\varphi_{F}(u;\alpha)=-\alpha^{-1} \log\left( \frac{e^{-u}(e^{-\alpha}-1)+1}{\alpha}\right) \ \ \text{for} \ \alpha \in (0,\infty),\end{displaymath}
i.e., the three-dimensional Frank copula. Then, the copula corresponding to the conditional distribution of $U_1,U_2\vert U_3=u_3$ is of the Ali-Mikhail-Haq family (LT of geometric distribution)
\begin{displaymath}
C_{AMH}(u_1,u_2;(\theta(\alpha,u_3)))=\frac{u_1 u_2}{1-\theta(\alpha,u_3) (1-u_1) (1-u_2)}
\end{displaymath}
with parameter $\theta(\alpha,u_3)=1-e^{-\alpha u_3}$, thus depending on the conditioning value $u_3$.
How the strength of dependence between $U_1,U_2\vert U_3=u_3$ depends on the parameter $\alpha$ and the conditioning value $u_3$ is illustrated in Figure \ref{amh_dep}.
\end{example}
\begin{remark}[Extensions of Archimedean copulas]
As for the multivariate Student-t copula, this decomposition allows for a natural extension of the Frank copula by assuming different $\alpha$ parameters for the different bivariate copulas in the models, and similar extension arise for other Archimedean copulas. The possible range of dependence induced by this extension is illustrated in Figure \ref{frank}.
\end{remark}

\begin{minipage}[c]{\textwidth}
\vspace{-.7cm}
\begin{minipage}[c]{0.5\textwidth}
\includegraphics[width=\textwidth]{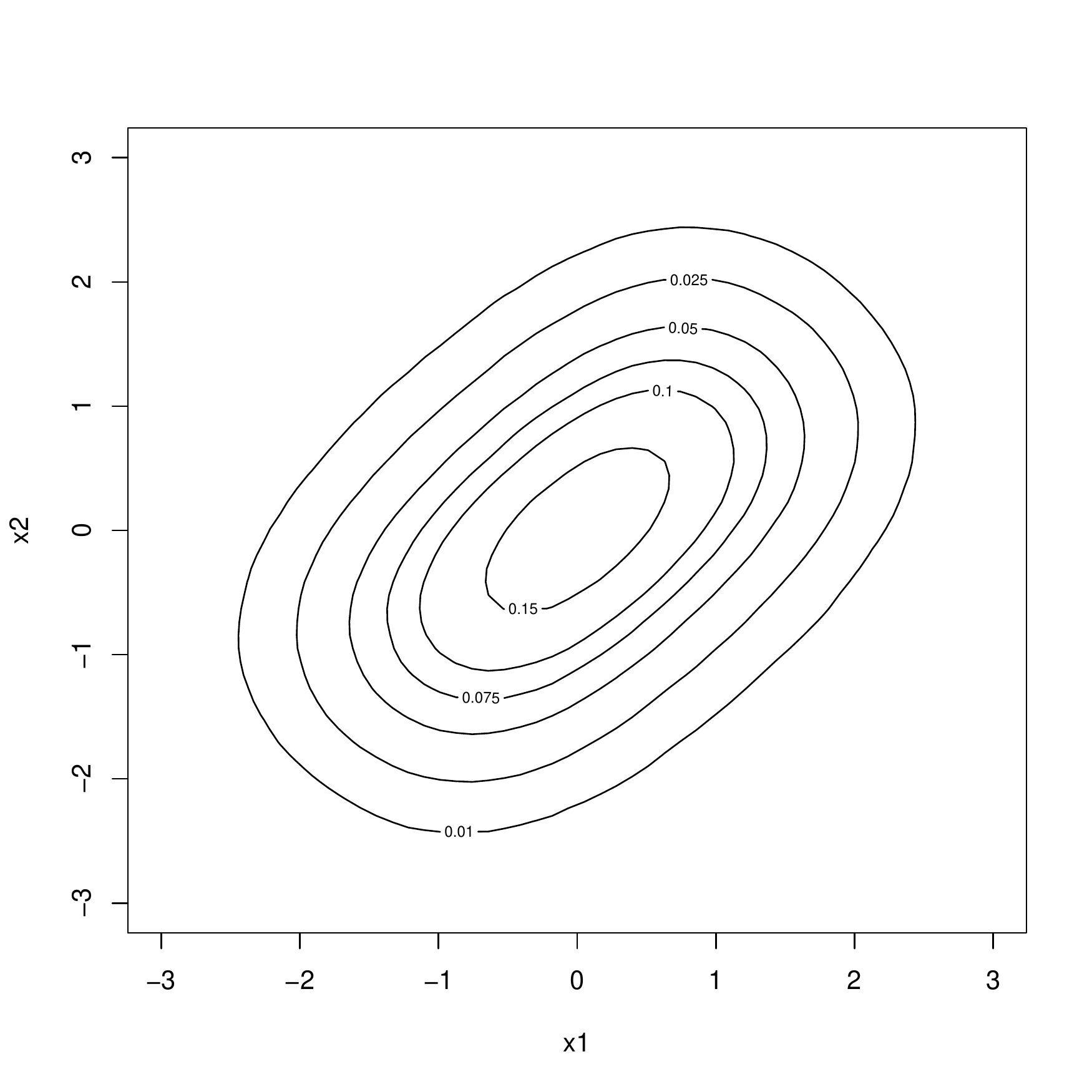}
\end{minipage}
\begin{minipage}[c]{0.5\textwidth}
\includegraphics[width=\textwidth]{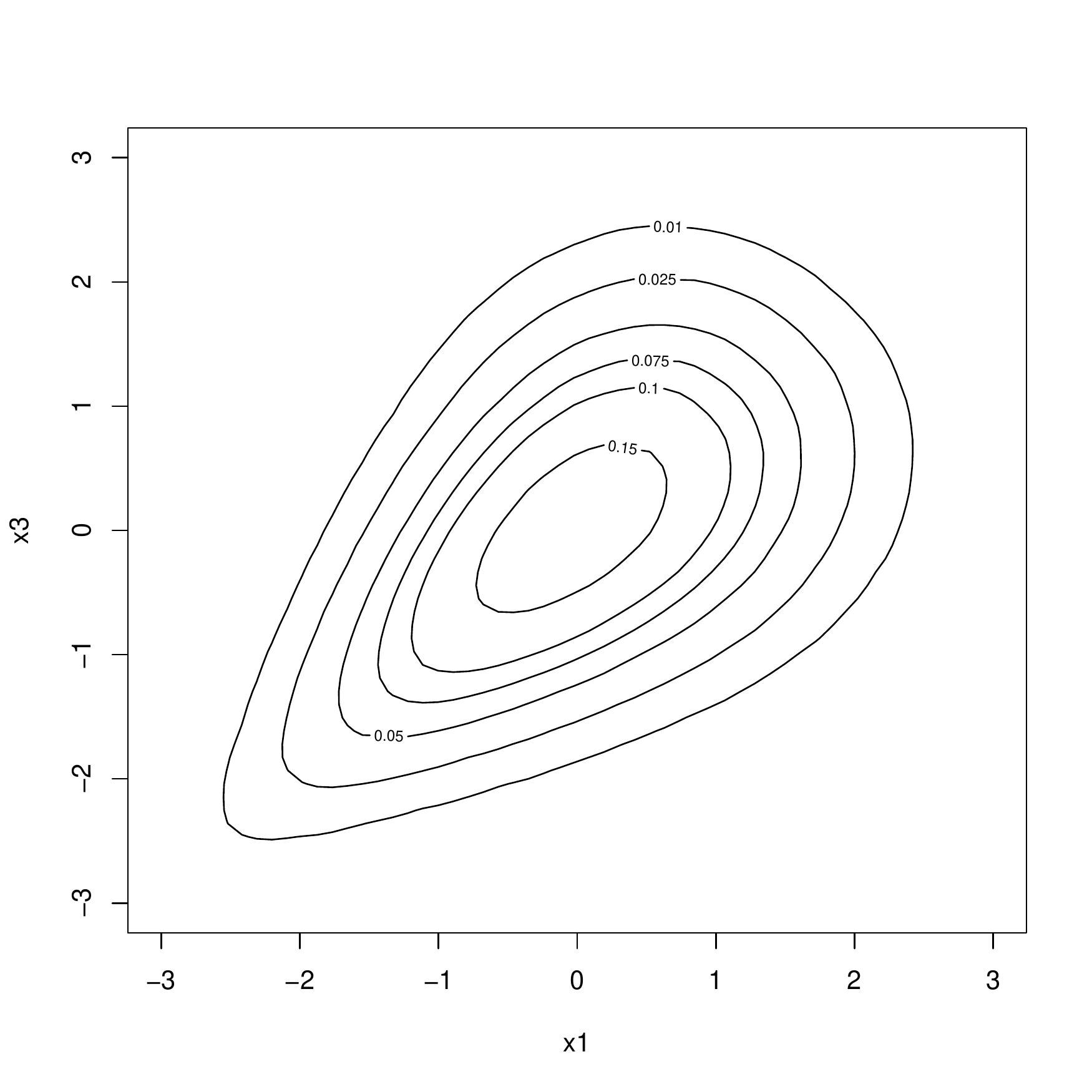}
\end{minipage}
\vspace{-.5cm}
\captionof{figure}{Density of the unspecified $13$ margin of a trivariate meta-normal distribution with extended Frank copula with $\alpha_{13}=1$ ($\tau=0.11$), $\alpha_{23}=3$ ($\tau=0.21$) and $\alpha_{12\vert 3 }=30$ (right panel). For comparison, the left panel shows the density of a bivariate meta normal distribution where the dependence is characterized by a Frank copula with $\alpha=3$ ($\tau=0.21$).}
\label{frank}
\end{minipage}

\begin{minipage}[c]{\textwidth}
\vspace{-1cm}
\begin{centering}
\includegraphics[width=\textwidth]{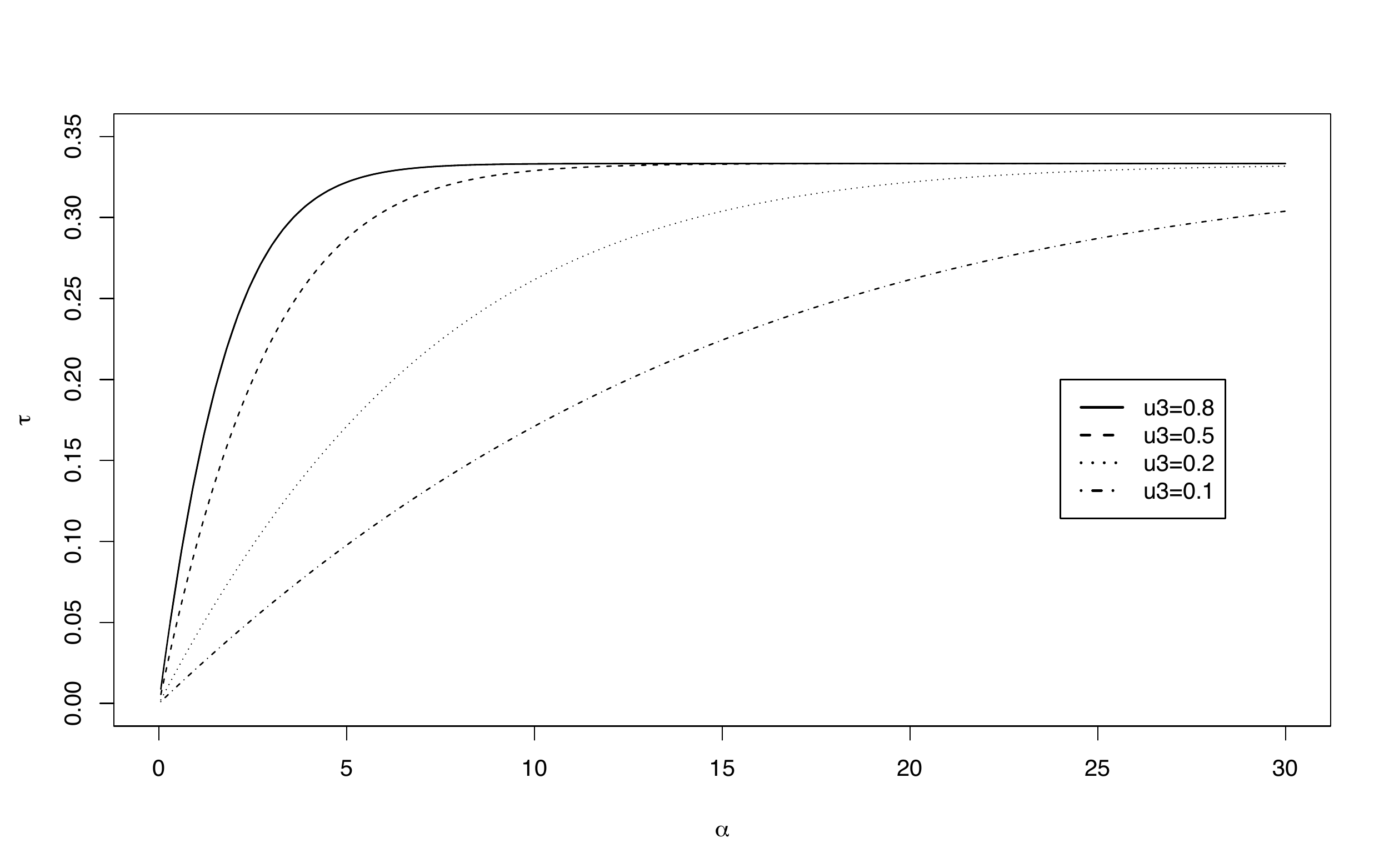}
\end{centering}
\vspace{-1.2cm}
\captionof{figure}{Kendall's $\tau$ for the AMH-copula with parameter $\theta(\alpha,u_3)=1-e^{-\alpha u_3}$ for different values of $\alpha$ and $u_3$.}
\label{amh_dep}
\end{minipage}

\section{Conclusion and summary}\label{conclusion}
While many popular statistical models are built by (sequential) conditioning, the implications on the properties of a distribution arising from this construction remained largely unknown. In addition, whether and under which assumptions popular classes of dependence models can be decomposed in this way was not investigated.

In this paper, we filled in this gap by characterizing the most common classes of copulas in terms of their decomposability as a PCC.

We showed that only the $d$-dimensional Archimedean copula based on the gamma LT or its generalization into $\mathcal{L}_d$ can be decomposed using a PCC in which the building blocks are independent of the values that are conditioned on. 
For elliptical copulas, the situation is more challenging. While the multivariate Normal and Student-t distribution are simplified PCCs, the conjecture that this is true for all elliptical distributions does not hold. The Student-t distribution even is the only non-bounded elliptical distribution in which conditioning affects only the location and scale of the resulting conditional distribution but not the correlation matrix.
Just as for the MTCJ in the Archimedean family, this distribution arises from a Gamma mixture for the square of the inverse scale parameter of a multivariate normal distribution. We have shown, that this makes it the only scale mixture of Normals which is a simplified PCC in dimension $d\geq 4$.

Whereas the simplifying assumption for PCCs is convenient, it is often too restrictive, and also the assumption of dealing with absolutely continuous PCC is sometimes too strong. We illustrated this with several examples and demonstrated that in most applications, however, simplified PCC can be extended to adapt to the situation.

While our classification results provide important insights into the properties of distributions arising from conditioning, we believe that further research in this direction remains necessary, in particular considering how well general distributions can be approximated by simplified PCCs.

Figure \ref{overview_graphic} does not include the extreme value copulas which
is another general class. Extreme value copulas are used for data that are
multivariate maxima or minima, and in this context, PCCs would not be
considered. It can be shown that extreme value copulas which do not have
independent subcomponents are not simplified PCCs; for example, given a
trivariate extreme value copula $C(u_1,u_2,u_3)$ which satisfies
$C(u_1^t,u_2^t,u_3^t)=C^t(u_1,u_2,u_3)$ for all powers with $t>0$, it can be shown that
the copula $C^*(v_1,v_2;u_3)$ obtained by conditioning on $U_3=u_3$
has lower tail form $v^\kappa \gamma(u_3,v)$ as $v\to 0$, where $\kappa\in
(1,2)$ does not depend on $u_3$ but the multiplying factor $\gamma$
is slowly varying in $v$ and depends on $u_3$.

\appendix

\section{Proofs involving Archimedean copulas - Theorem \ref{theorem:archimedean}}\label{appendix:archimedean}

Because the proof is more intuitive, we first outline the case where $\vph$ is the LT of a positive random variable. In this case, there is a representation of the copula as a mixture of powers.
The mixture distribution from which the Archimedean copula arises can be written as (c.f. \citeN[p. 86]{Joe}):
\begin{displaymath}
F(\textbf{x}_{1:d})=\int_0^\infty \prod_{j=1}^d \left[G(x_j)\right]^\alpha dF_A(\alpha)=\int_0^\infty \prod_{j=1}^d \left[G(x_j)\right]^\alpha f_A(\alpha) d\alpha,
\end{displaymath}
where $F_A$ is the cdf of a positive rv, with corresponding density $f_A$, and 
\begin{displaymath}
F(x)=\int_0^\infty [G(x)]^\alpha dF_A(\alpha) =\vph_A(-\ln G(x))
\end{displaymath}
is the common univariate cdf with $\vph_A$ being the LT of A. Without loss
of generality we can assume that $F(x)=x$ on $[0,1]$, then
$F_{1:d}(\textbf{x}_{1:d})$, $x_j \in [0,1]$, $j=1,\ldots,d$ is a copula.
Also $G(x)=\exp\{-\vph_A^{-1}(x)\}$ on $[0,1]$ is differentiable.
With $g=G'$, the marginal cdf of the last $k$ variables and its density are:
\begin{equation*}
\begin{split}
F_{(d-k+1):d}(\textbf{x}_{(d-k+1):d}) &=\int_0^\infty \left[\prod_{i=d-k+1}^d G(x_i)\right]^\alpha f_A(\alpha) d\alpha, \\
 f_{(d-k+1):d}(\textbf{x}_{(d-k+1):d}) &=\frac{\partial^k }{\partial x_{d-k+1}\cdots \partial x_d}F_{(d-k+1):d}(\textbf{x}_{(d-k+1):d})\\ &=\prod_{j=d-k+1}^d \left[\frac{g(x_j)}{G(x_j)}\right] \int_0^\infty \alpha^k \left[\prod_{i=d-k+1}^d G(x_i)\right]^\alpha f_A(\alpha) d\alpha,
\end{split}
\end{equation*}
and the conditional cdf of the first $d-k$ variables given the last $k$ is:
\begin{equation*}
\begin{split}
F_{1:(d-k)\vert(d-k+1): d}(\textbf{x}_{1:(d-k)}\vert \textbf{x}_{(d-k+1):d})
&=\frac{\partial^k F(x_1,\dots,x_d)}{\partial x_{d-k1}\cdots\partial x_d} \Bigg/ f_{(d-k+1):d}(\textbf{x}_{(d-k+1):d}) \\
&=\frac{\int_0^\infty \prod_{j=1}^{d-k} \left[G (x_j)\right]^\alpha \cdot \alpha^k  \prod_{i=d-k+1}^d \left[G(x_i)\right]^\alpha f_A(\alpha)d\alpha}{\int_0^\infty \alpha^k \prod_{i=d-k+1}^d \left[G(x_i)\right]^\alpha f_A(\alpha) d\alpha}\\ &=\int_0^\infty \prod_{j=1}^{d-k}  \left[G(x_j)\right]^\alpha \cdot f_{A^\star}(\alpha)d\alpha,
\end{split}
\end{equation*}
where
\begin{displaymath}
f_{A^\star}(\alpha)=\frac{\alpha^k  \prod_{i=d-k+1}^d \left[G(x_i)\right]^\alpha f_A(\alpha)}{\int_0^\infty \beta^k  \prod_{i=d-k+1}^d \left[G(x_i)\right]^\beta f_A(\beta) d\beta} \propto \alpha^k e^{\alpha  \sum_{i=d-k+1}^d\ln(G(x_i))} f_A(\alpha).
\end{displaymath}
In this case, $A^\star$ has the same parameter form of density as $A$ if
\begin{equation}
f_A(\alpha;\eta,\theta)=e^{-\alpha \eta} \alpha^{\theta-1} h(\alpha) / C(\eta,\theta),
\label{mixing_density}
\end{equation}
where $h$ is a positive-valued function (it is not absorbed in the $\alpha^{\theta-1}$ term only if it is a non-power function) and $C(\theta,\eta)$ is a (finite) normalizing constant. From above, $A^\star$ has the same density form with parameters $(\eta - \sum_{i=d-k+1}^d\ln G(x_i),\theta +k)$.
The conditional copula does not depend on $x_{d-k+1},\ldots,x_d$ only if $\eta$ is a rate (or reciprocal scale) parameter, since  Archimedean copulas are invariant to scale changes of the mixing distribution \cite[p. 60]{mai2012}. For $\eta$ to be a rate or inverse scale parameter of (\ref{mixing_density}) only, $h(\alpha)$ must be a power of $\alpha$. Hence $f_A$ is a gamma density, and $F(\textbf{x}_{1:d})$ is a MTCJ copula.

For the general case, where $\vph \in \mathcal{L}_d$ is not necessarily a LT, we prove the result by construction a functional equation. Let
$$C(u_1,\ldots,u_d)=
 \vph\Bigl(\sum_{j=1}^d \vph^{-1}(u_j)\Bigr)   $$
be an Archimedean copula. 
Let $(U_1,\ldots,U_d)$ be a random vector with this distribution.
Suppose $\vph$ has support $[0,s_0)$ where $s_0=\inf\{s, \varphi(s)=0 \}$ is infinite for a Laplace
transform, but could be finite for $\vph\in{\cal L}_d$ which is not
a Laplace transform. The case of finite support implies that
$\vph(s)=0$ for $s\ge s_0$.  
Let $C_{1\cdots d-1|d}(u_1,\ldots,u_{d-1}|u_d)=
{\p C(u_1,\ldots,u_d)/\p u_d}$ be the conditional distribution given
$U_d=u_d$. We will show now that the copula for this is another Archimedean copula,
say based on $\psi$, where $\psi\in{\cal L}_{d-1}$.
By differentiation, with $h=-\vph'$ and $a=\vph^{-1}(u_d)\in[0,s_0)$ 
with $0< u_d\le 1$,
 $$ C_{1\cdots d-1|d}(u_1,\ldots,u_{d-1}|u_d)= 
  {h\Bigl(\sum_{j=1}^d \vph^{-1}(u_j)\Bigr) \Bigm/ h(a) },$$
with $j$th ($1\le j\le d-1$) margin $F_j(u_j|u_d)=
h( \vph^{-1}(u_j)+ a)/h(a)=:v_j$.
Note that $h$ is monotone increasing by definition of ${\cal L}_d$.
Hence $\vph^{-1}(u_j)=h^{-1}(v_jh(a))-a$ for $1\le j\le d-1$
and the copula of the conditional distribution of $U_1,\ldots, U_{d-1}$ given $U_d=u_d$ is:
 $$ C^*(v_1,\ldots,v_{d-1};a)=h\Bigl(\sum_{j=1}^{d-1} h^{-1}(v_jh(a))
    -(d-2)a\Bigr) \Bigm/ h(a) $$
    
Defining $s=\psi^{-1}(v;a)=h^{-1}(vh(a))-a$ and $v=\psi(s;a)=h(s+a)/h(a)$,
this is a Archimedean copula
$$\psi(\psi^{-1}(v_1;a)+\cdots+\psi^{-1}(v_{d-1};a);a)$$ with 
generator function $\psi(\cdot,a)$.
As $u_d\to1$, $a=\vph^{-1}(u_d)\to0$, and $h(0)=-\vph'(0)$ can
be positive or infinite but not 0, by the definition of $\varphi$.

Consider first the case where $h(0)$ is finite and positive, and $h'(0)$
is finite.
The copula of the conditional distribution does not depend on $u_d$ or $a$
if and only if there is a continuous differentiable scale function $\gamma(a)>0$ 
such that $\gamma(0)=1$ and
 $$\psi(s;a)=h(s+a)/h(a)=\psi(s\gamma(a);0)=h(s\gamma(a))/h(0); \quad
  0\le s<s_0.$$
Writing the above functional equation in $h$ as
$h(s+a)\,h(0)=h(a)\,h(s\gamma(a))$ and differentiating with respect to $a$ yields
 $$h'(s+a)\,h(0)=h'(a)\,h(s\gamma(a))+h(a)\,h'(s\gamma(a))\,s\gamma'(a).$$
With $a=0$ if follows that
 $$h'(s)\,h(0)=h'(0)\,h(s)+h(0)\,h'(s)\,s\gamma'(0),$$
or
  $$h'(0)\,h(s) =  h(0)[1-s\gamma'(0) ] h'(s)$$
This has solution $h(s)=h(0)[1-s\gamma'(0)]^{\alpha}$ where
$\alpha=-h'(0)/[h(0)\gamma'(0)]$. Since $h=-\psi'$ must be decreasing,
there are 2 possibilities 
\begin{itemize}
\item[(i)] $s_0=\oo$, $\gamma'(0)<0$, $\alpha<0$, or
\item[(ii)] $\gamma'(0)>0$, $s_0=1/\gamma'(0)$, $\alpha>0$.
\end{itemize}
By integrating $h$ over $s$ we obtain $\vph(s)=(1-s\gamma'(0))_+^{1+\alpha}$, since $\psi(0)=0$.
In case (i), we must have $\alpha< -1$ and in case (ii), $1+\alpha\ge d-1$
in order for $\psi\in{\cal L}_d$ (see also \citeN[pp. pages 157--158]{Joe}). In case (i) the obtained generating function support on $[0,\infty)$ and corresponds to the "standard" MTCJ copula, whereas case (ii), with bounded $\vph$ yields the extended MTCJ copula.

If $h'(0)$ or $h(0)$ is infinite, the above is modified as follows.
Let $0<\eps<s_0$.
There is a continuous differentiable scale function $\gamma(a)>0$ 
such that $\gamma(\eps)=1$ and
 $$\psi(s;a)=h(a+s)/h(a)=\psi(s\gamma(a);\eps)=h(s\gamma(a)+\eps)/h(\eps).$$ 
Cross-multiplying and differentiating the above with respect to $a$, and then setting $a$ to $\eps$ yields
  $$h'(\eps)\,h(s+\eps) =  h(\eps)[1-s\gamma'(\eps) ] h'(s+\eps),
  \quad  0<s<s_0-\eps.$$
This has solution $h(s+\eps)=h(\eps)[1-s\gamma'(\eps)]^{\alpha}$ where
$\alpha=-h'(\eps)/[h(\eps)\gamma'(\eps)]$ so 
that $\psi(s;\eps)=h(s+\eps)/h(\eps)=[1-s\gamma'(\eps)]^{\alpha}$.
The conclusion is the same as above, because after integrating $h$ to get
$\vph$, one would conclude that this leads to $\vph'(0)$ and $\vph''(0)$
being finite. \hfill $\square$

As an illustration for how the copulas of conditional distributions are derived when the conditioning is on more than one variable, let us consider the case where $d\geq 4$ and 
$$C(u_1,\ldots,u_d)=\vph\left(\sum_{j=1}^d\vph^{-1}(u_j)\right)$$ is an Archimedean copula.
Let $(U_1,\ldots,U_d)$ be a random vector with this distribution and let $a_j=\varphi^{-1}(u_j)$ for $j=d-1$ and $j=d$.
Then, the conditional distribution of $(U_1,\ldots,U_{d-2})$ given $U_{d-1}=u_{d-1}, U_d=u_d$ is
\begin{equation*}
\begin{split}
F_{1:(d-2)\vert d-1,d}(u_1\ldots,u_{d-2}\vert u_{d-1},u_d)&= \frac{\partial^2 C(u_1,\ldots,u_d) / \partial u_{d-1} u_d}{\vph''(a_{d-1}+a_d)/ [\vph'(a_{d-1}) \vph'(a_d)]}\\ &=\frac{\vph''\left( \sum_{j=1}^{d-2} \vph^{-1}(u_j) + a_{d-1}+ a_d \right)}{\vph''(a_{d-1}+a_d)}.
\end{split}
\end{equation*}
The copula corresponding to this distribution is again Archimedean, based on a generator $\psi$. By differentiation, with $h=\vph''$, $a=a_{d-1}+a_d$, $F_{1:(d-2)\vert d-1,d}(\cdot \vert u_{d-1},u_d)$ has $j$th $(a\leq j \leq d-2)$ margin $F_j(u_j\vert u_{d-1},u_d)=h(\vph^{-1}(u_j) + a) / h(a) =: v_j$. Note that $h$ is monotone decreasing because $\vph$ is the generator of a $d$-dimensional Archimedean copula. Hence, $\vph^{-1}(u_j)=h^{-1}(v_j h(a)) -a$ for $1\leq j \leq d-2$ and the copula of the conditional distribution can be expressed as 
\begin{equation*}
C_{1:(d-2)\vert d-1,d}(v_1,\ldots,v_{d-2};a)=h\left(\sum_{j=1}^{d-2} h^{-1}(v_j h(a)) - (d-3) a\right) \Big/ h(a).
\end{equation*}
This is an Archimedean copula $\psi(\psi^{-1}(v_1;a)+\cdots + \psi^{-1}(v_{d-2};a);a)$ with $s=\psi^{-1}(v;a) = h^{-1}(vh(a))-a$ and $v=\psi(s;a) = h(s+a)/h(a)$.
The same pattern extends to conditional distributions of Archimedean copulas with three or more conditioning variables.

\section{Proofs involving elliptical copulas}

\subsection{Theorem 4.1}\label{appendix:student1}
In the remainder, we will use the following notation. A $d$-dimensional Student-t distribution with mean vector $\textbf{0}$, correlation matrix $R$ and degrees of freedom $\nu$ is denoted as $t_d(\textbf{0},R,\nu)$. Its pdf is $f_{t,d}(\cdot; R,\nu)$ and we write $F_{t,d}(\cdot; R,\nu)$ for the cdf.

Let us consider a $d$-dimensional random vector $\textbf{X}=(\textbf{X}_A,\textbf{X}_B)=(X_1,X_2,\textbf{X}_B)$, with $A=\{1,2\}$, distributed according to a multivariate Student-t distribution with $\nu$ degrees of freedom, mean $\bmu=(\mu_1,\dots,\mu_n)^T$ and scale matrix 
\begin{displaymath}
R=(R_{i,j})_{i,j=1,\dots,d}=
\begin{pmatrix}
R_A & R_{AB}\\
R_{AB}^T & R_B
\end{pmatrix},
\text{where } R_{AB}=\begin{pmatrix}
R_{1B} \\
R_{2B}
\end{pmatrix}, R_A=\begin{pmatrix}
R_{11} & R_{12} \\
R_{21} & R_{22}\\
\end{pmatrix}.
\end{displaymath}
Let us define
\begin{equation*}
\begin{split}
&V_{A\vert B}:= \begin{pmatrix} R_{11} & R_{12}\\ R_{21} & R_{22}\end{pmatrix}-\begin{pmatrix}R_{1B} \\ R_{2B} \end{pmatrix} R_B^{-1} \begin{pmatrix} R_{1B}^T & R_{2B}^T \end{pmatrix}=:\begin{pmatrix}V_{1\vert B} &  V_{{A\vert B}_{12}} \\ V_{{A\vert B}_{21}} & V_{2\vert B} \end{pmatrix},\\
&R_{A\vert B}:=diag(V_{A\vert B})^{-\frac{1}{2}}\ V_{A\vert B}\ diag(V_{A\vert B})^{-\frac{1}{2}}, 
\gamma(\textbf{x}_B):=\sqrt{\frac{1+(1/\nu)\textbf{x}_B^T R_B^{-1} \textbf{x}_B}{(\nu+d-2)/\nu}}, \\
\end{split}
\end{equation*}
then we have for the conditional distribution of $\textbf{X}_A$ given $\textbf{X}_B=\textbf{x}_B$:
\begin{equation*}
F_{A\vert B}(\textbf{x}_A\vert \textbf{x}_B)= F_{t,2}\left(\frac{x_1-\mu_{1\vert B}(\textbf{x}_B)}{\sqrt{{V}_{1\vert B}}\cdot \gamma(\textbf{x}_B)}, \frac{x_2-\mu_{2\vert B}(\textbf{x}_B)}{\sqrt{{V_{2\vert B}}}\cdot \gamma(\textbf{x}_B)};R_{A\vert B},\nu+d-2 \right),
\end{equation*}
c.f. \shortciteN[Lemma 2.2]{nikoloulopoulos2009}.
 Taking $x_2 \rightarrow \infty$ ($x_1 \rightarrow \infty$) yields
\begin{equation*}
\begin{split}
&F_{1\vert B}(x_1\vert \textbf{x}_B)=F_{t,1}\left(\frac{x_1-\mu_{1\vert B}(\textbf{x}_B)}{\sqrt{{V}_{1\vert B}}\cdot \gamma(\textbf{x}_B)};\nu+d-2 \right)\\
&F_{2 \vert B}(x_2\vert \textbf{x}_B)=F_{t,1}\left(\frac{x_2-\mu_{2\vert B}(\textbf{x}_B)}{\sqrt{{V}_{2\vert B}}\cdot \gamma(\textbf{x}_B)};\nu+d-2 \right),
\end{split}
\end{equation*}
so that we can now determine the corresponding copula:
\begin{equation*}
\begin{split}
C_{1,2\vert 3:d}(u_1,u_2)&=C_{1,2\vert 3:d}(u_1,u_2\vert \textbf{x}_B)=F_{12\vert 3:d}\left(F_{1\vert 3:d}^{-1}(u_1\vert \textbf{x}_B),F_{2\vert 3:d}^{-1}(u_2\vert \textbf{x}_B) \vert \textbf{x}_B\right)\\
&=F_{t,2}\left({F_{t,1}}^{-1}(u_1;\nu+d-2),{F_{t,1}}^{-1}(u_2;\nu+d-2);R_{A\vert B},\nu+d-2 \right),
\end{split}
\end{equation*}
where the additive constants and scaling factors cancel. This is a bivariate Student-t copula with $\nu+d-2$ degrees of freedom and correlation matrix $R_{A\vert B}$ and does not depend on $\textbf{x}_B$ anymore.\hfill $\square$

\subsection{Theorem 4.3}\label{appendix:student2}
The proof is similar to that for Archimedean copulas in that the same functional equation can be obtained.
Without loss of generality, let us consider the case of a $d$-dimensional elliptical distribution with zero means and zero correlations. In this case, the density of the distribution is given by
\begin{equation*}
\begin{split}
&f_{1:d}(\textbf{x}_{1:d})=g_d(x_1^2+\ldots+x_d^2), \text{with marginal density }\\ &f_{(d-k+1):d}(\textbf{x}_{(d-k+1):d})=g_k(x_{d-k+1}^2+\ldots+x_d^2).
\end{split}
\end{equation*}
For $f_{1:(d-k)\vert (d-k+1):d}(\textbf{x}_{1:(d-k)}\vert \textbf{x}_{(d-k):d})$ we obtain
\begin{displaymath}
f_{1:(d-k)\vert (d-k+1):d}(\textbf{x}_{1:(d-k)}\vert \textbf{x}_{(d-k):d})=\frac{g_d(x_1^2+\ldots+x_d^2)}{g_k(x_{d-k+1}^2+\ldots+x_d^2)}.
\end{displaymath}
For this distribution to be in the same location-scale family irrespective of the values of $x_{d-k+1},\ldots, x_d$, we must have that for given $\sum_{i=d-k+1}^d x_i^2\neq \sum_{i=d-k+1}^d {x_i^\star}^2$ there exists $\gamma(\textbf{x}_{(d-k+1):d},\textbf{x}_{(d-k+1):d}^\star)$ such that
\begin{displaymath}
f_{1:(d-k)\vert (d-k+1):d}(\gamma \cdot \textbf{x}_{1:(d-k)} \vert \textbf{x}_{(d-k+1):d}^\star) \propto f_{1:(d-k)\vert (d-k+1):d}(\textbf{x}_{1:(d-k)} \vert \textbf{x}_{(d-k+1):d}).
\end{displaymath}
With $\textbf{x}_{(d-k+1):d}^\star=\mathbf{0}$, $a=\sum_{i=d-k+1}^d x_i^2$, $t=\sum_{i=1}^{d-k} x_i^2$ and $\delta(a)=\frac{1}{\gamma(\textbf{x}_{(d-k+1):d},0)}$ this implies that
\begin{equation}\label{eq:t43-1}
g_3(t+a)=\xi(a) \cdot g_3\left(\frac{t}{\delta(a)}\right).
\end{equation}
where $\xi(a)$ equals $g_1(a)$ times a constant depending on $a$ and $\xi(\cdot)$, $\delta(\cdot)$ are differentiable scale functions.
Since $g_3(0)=0$ if and only if $g_3(a)=0$ for all values of $a$, we must have $g_3(0) > 0$.
Using $t=0$ in Equation \ref{eq:t43-1}, we conclude that $g_3(0)$ is finite. Thus, we can define $h(t):=\frac{g_3(t)}{g_3(0)}$ and obtain from Equation \ref{eq:t43-1} that
\begin{displaymath}
h(t+a)=h(a) \cdot h\left(\frac{t}{\delta(a)}\right).
\end{displaymath}
Using that $\delta(0)=1$ by the definition of $\delta$,
\begin{equation*}
\begin{split}
\frac{d}{da}: \ & h'(t+a)=h'(a) \cdot h\left(\frac{t}{\delta(a)}\right) + h(a) \cdot h'\left(\frac{t}{\delta(a)}\right) \cdot \left(\frac{-t \delta'(a)}{\delta(a)^2}\right) \\
a=0: \ & h'(t) = h'(0) \cdot h(t) + h'(t) \cdot (-t \cdot \delta'(0) ) .
\end{split}
\end{equation*}
In other words, the function $h$ must fulfill the differential equation
\begin{displaymath}
(1+\beta t) h'(t) = \alpha h(t),
\end{displaymath}
where $\alpha=h'(0)$, $\beta=\delta'(0)$. From this, we obtain for $\beta>0$ that $h(t)=(1+\beta t)^\frac{\alpha}{\beta}$ which corresponds to the elliptical generator of a Pearson type VII (scaled Student-t) distribution. For $h$ to yield a well defined density in $d$ dimensions, $\frac{\alpha}{\beta}$ must be given in the form $\frac{\alpha}{\beta}=-(\nu +d)/2$, $\nu >0$, to ensure integrability with respect to $t^{d/2-1}$. 

For $\beta<0$, $h(t)=(1-\beta t)_+^\frac{\alpha}{\beta}$, which leads to a well-defined density for $\frac{\alpha}{\beta} > -1$ and is differentiable and thus a valid solution for $\frac{\alpha}{\beta} > 1$. 

By integration, we obtain that the generator function for lower-dimensional margins $g_{d-k}(t)$ is proportional to $(1-\beta t)_+^{\frac{\alpha}{\beta}+\frac{k}{2}}$. Ergo, also the conditional distributions of the lower-dimensional margins remain within the same location-scale family for all values of the conditioning variables.
\hfill $\square$
\subsection{Theorem 4.4}\label{appendix:student3}
A general scale mixture of Normals in dimension $d$ can be written as $(X_1,\ldots,X_d)=(Z_1,\ldots,Z_d)/\rt{W}$ where
$W$ is a random variable on $(0,\oo)$ with density $f_W$,
and $(Z_1,\ldots,Z_d)$ is multivariate Gaussian with zero mean vector
and covariance matrix $\S$. Without loss of generality, we can assume that
all diagonal entries of $\S$ are $1$, i.e., $\S$ is a correlation matrix.
This implies for the $d$-variate generator $g_d$ that
 $$|\S|^{-1/2}g_{d}(\x'\S^{-1}\x)
  = (2\pi)^{-d/2} |\S|^{-1/2} \int_0^\oo w^{d/2} \exp\{ -\half w\x'\S^{-1}\x \}
   f_W(w)\,dw.$$
Similarly, if $\S_k$ is the leading $k\times k$ matrix of $\S$, 
and $\x_k=(x_1,\ldots,x_k)'$, then by marginalizing  out the last
$d-k$ components we get
  $$|\S_k|^{-1/2}g_k(\x'_k\S_k^{-1}\x_k) = (2\pi)^{-k/2} |\S_k|^{-1/2} \int_0^\oo w^{k/2} 
  \exp\{ -\half w\x'_k\S_k^{-1}\x_k \} f_W(w)\,dw.$$
Hence, the univariate margin is
 $$f_1(x_1)=g_1(x_1^2)=(2\pi)^{-1/2} \int_0^\oo w^{1/2} \exp\{ -\half wx_1^2\}
  f_W(w)\,dw.$$
In the above, the density $f_W$ could be replaced with $dF_W$ in a Stieltjes integral. However, this would not affect the remainder of this proof; we omit it for notational convenience.
Note that
  $$g_1(0)=(2\pi)^{-1/2}\E[W^{1/2}], \quad
  g_d(0)=(2\pi)^{-d/2}\E[W^{d/2}].$$
Let $G$ be the cdf corresponding to the marginal generator $g_1$.
Then, the copula density for $g_d$ is
 \begin{equation}\label{eq:elliptical_copula_density}
 c(u_1,\ldots,u_d;g_d,\S)=|\S|^{-1/2}{ g_d(\x'\S^{-1}\x) \over
   g_1((G^{-1}(u_1))^2) \cdots g_1((G^{-1}(u_d))^2) } 
\end{equation}
with $x_j=G^{-1}(u_j)$. From this general form of the density, we will obtain two equations for the moments of the mixing variable $W$ which will lead to necessary conditions on the conditional distributions of simplified PCCs in the elliptical class. Directly from (\ref{eq:elliptical_copula_density}) we get that
  \begin{equation}\label{E1}
  c(0.5,\ldots,0.5;g_d)= |\S|^{-1/2} {g_d(0) \over g_1^d(0) }
  = |\S|^{-1/2} {\E[W^{d/2}] \over  \E^d[W^{1/2}] } 
 \end{equation}
This means that if $W_1,W_2$ are two different mixing variables, then
the copula densities with fixed $\S$ are different unless the necessary condition of
  $$ {\E[W^{d/2}_1] \over  \E^d[W^{1/2}_1] }
  = {\E[W^{d/2}_2] \over  \E^d[W^{1/2}_2] }$$
holds. 
For the second equation, let us consider
 \begin{equation}\label{eq:elliptical_copula_derivative}
 c(0.5,\ldots,0.5,u;g_d)=|\S|^{-1/2} {g_d(\alpha x^2) \over g_1^{d-1}(0) g_1(x^2 )} 
\end{equation}
where $\alpha\ge 1$ is the $(d,d)$ element of $\Sigma^{-1}$ and $x=G^{-1}(u)$. Note that $\alpha$ is not a scaling factor of the marginal distribution but a function of the correlation matrix. In particular, we can obtain all $\alpha \geq 1$ from correlation matrices 
\begin{displaymath}
\Sigma_\alpha = \begin{pmatrix}
\mathbbmss{1}_{d-2} & \textbf{0} & \textbf{0}\\
\textbf{0} & 1 & 1-\frac{1}{\alpha}\\
\textbf{0} & 1-\frac{1}{\alpha} & 1
\end{pmatrix},
\end{displaymath}
where $\mathbbmss{1}_{d-2}$ is the $(d-2)\times(d-2)$ identity matrix.
Using $\partial x/\partial u=1/g_1(x^2)$, we determine the first derivative of (\ref{eq:elliptical_copula_derivative}) as
  $$|\S|^{-1/2} \left[ { \alpha g'_d(\alpha x^2)\over g_1^{d-1}(0) g_1(x^2) }
    - {g_d(\alpha x^2) g'_1(x^2) \over g_1^{d-1}(0) g_1^2(x^2) }
  \right] \cdot {2x \over g_1(x^2)}. $$
After taking the second derivative with respect to $u$ and then setting $x=0$ $(u=0.5)$,
all terms are 0 except
\begin{equation}\label{F1}
\begin{split}
\lim_{x\to 0}& |\S|^{-1/2} \left[ { ag'_d(\alpha x^2)\over g_1^{d-1}(0) g_1(x^2) }
    - {g_d(\alpha x^2) g'_1(x^2) \over g_1^{d-1}(0) g_1^2(x^2) }
  \right] \cdot {2 \over g_1^2(x^2)}\\
  &= 2|\S|^{-1/2} \left[ {\alpha g'_d(0) \over g_1^{d+2}(0) }
   - { g_d(0) g'_1(0) \over g_1^{d+3}(0)} \right]\\
&= (2\pi)^{-1}|\S|^{-1/2} 
  \left[ -{\alpha\E(W^{(d+2)/2}) \over \E^{d+2}(W^{1/2})}
   + {\E(W^{d/2})\E(W^{3/2}) \over \E^{d+3}(W^{1/2})}\right].
   \end{split}
\end{equation}

Let us now consider the analogon for conditional densities.
Let $f_{1\cdots d}(\x)$ be the density of $(X_1,\ldots,X_d)$ and
let $f_1$ be the density of $X_1$ or any $X_j$. Let
 $$f_{1\cdots d-1|d}(x_1,\ldots,x_{d-1}|x_d)=f_{1\cdots d}(\x)/f_1(x_d).$$

For this, we decompose $\x'\S^{-1}\x$ as $(\x^*)'\S^{-1}_{11\cdot 2}\x^*$
where $\x^*=(x_1-\s_{1d}x_d,\ldots,x_{d-1}-\s_{d-1,d}x_d)'$ 
and $\S_{11\cdot 2}$ is the conditional covariance matrix of
$(Z_1,\ldots,Z_{d-1})$ given $Z_d$.
Writing the conditional densities in mixture form,
\begin{equation*}
\begin{split}
f_{1\cdots d}(\x)&=\int_0^\oo w^{d/2}\phi_d(\x w;R)\,f_W(w)\,dw \\
  &= (2\pi)^{-d/2} |\S|^{-1/2} \int_0^\oo w^{d/2} \exp\{ -\half w\x'\S^{-1}\x \}
   f_W(w)\,dw\\
   &= (2\pi)^{-d/2} |\S_{11\cdot2}|^{-1/2} \int_0^\oo w^{d/2} \exp\{ -\half 
  w{\x^*}'
  \S^{-1}_{11\cdot 2}\x^*\} \exp\{-\half wx_d^2\} f_W(w)\,dw,
   \end{split}
   \end{equation*}
and
 $$f_1(x_d) = (2\pi)^{-1/2}\int_0^\oo w^{1/2} \exp\{-\half wx_d^2\}f_W(w)\,dw$$
so that
 \begin{equation}\label{E2}
 \begin{split}
 &f_{1\cdots d-1|d}(x_1,\ldots,x_{d-1}|x_d)= \\ &a(x_d)
  (2\pi)^{-(d-1)/2} |\S_{11\cdot2}|^{-1/2} \int_0^\oo w^{(d-1)/2} \exp\{ 
  -\half w{\x^*}'
  \S^{-1}_{11\cdot 2}\x^*\} f_{W^*}(w)\,dw,
 \end{split} 
 \end{equation}
is a scale mixture with mixing density
$f_{W^*}(w;x_d) = w^{1/2} \exp\{-\half wx_d^2\} f_W(w)/a(x_d)$,
where $a(x_d)$ is a normalizing constant. We denote the random variable with this density by $W^*(x_d)$.

For $d\ge 3$, Equations (\ref{E1}) and (\ref{E2}) imply that a necessary condition for the copula corresponding to the distribution of $X_1,\ldots,X_{d-1}$ given $X_d=x_d$ to be independent of $x_d$ is  that
 \begin{equation}\label{E3}
 \begin{split}
 { \int_0^\oo w^{d/2} \exp\{-\half wx_d^2\}f_W(w)\,dw \cdot
  \Big(\int_0^\oo w^{1/2} \exp\{-\half wx_d^2\}f_W(w)\,dw\Bigr)^{d-2} \over 
 \Bigl(\int_0^\oo w \exp\{-\half wx_d^2\}f_W(w)\,dw\Bigr)^{d-1}}\\ = {\E[\{W^*(x_d)\}^{(d-1)/2}] \over  \E^{d-1}[{W^*}^{1/2}(x_d)] }
 \end{split}
 \end{equation}
is a constant over $x_d$.
Similarly, we obtain from (\ref{F1}) that 
\begin{equation}\label{F2}
|\S|^{-1/2} 
  \left[ -{\alpha\E(W^*(x_d)^{(d+1)/2}) \over \E^{d+1}(W^*(x_d)^{1/2})}
   + {\E(W^*(x_d)^{(d-1)/2})\E(W^*(x_d)^{3/2}) \over \E^{d+2}(W^*(x_d)^{1/2})}\right],
\end{equation}
must be equal to a constant $\beta(\alpha)$. To rewrite these equations, let $t=\half x_d^2\ge 0$ and let $V$ be a random variable with
density $f_V(v)$ proportional to $v^{1/2}f_W(v)$. 
let $V_t$ be random variable with density proportional to $e^{-tv}f_V(v)$;
this is an Laplace transform tilt of the density  of $V$ with normalizing
constant $\varphi_V(t)$, the LT of $V$ at $t$. Note that
$V_t$ has finite positive integer moments for $t>0$ and $V_0=V$.

Then, (\ref{E3}) can be rewritten as
 \begin{equation} \label{E4}
 \begin{split}{ \int_0^\oo v^{(d-1)/2} \exp\{-vt\}f_V(v)\,dv \cdot
  \Big(\int_0^\oo  \exp\{-vt\}f_V(v)\,dv\Bigr)^{d-2} \over 
 \Bigl(\int_0^\oo v^{1/2} \exp\{-vt\}f_V(v)\,dv\Bigr)^{d-1}}
 = { \E[V_t^{ (d-1)/2}] \over \E^{d-1}[V_t^{1/2}] },
  \end{split}
  \end{equation}
which must be constant over $t\ge 0$ while (\ref{F2}) leads to
\begin{equation}\label{F3}
|\S|^{-1/2} 
  \left[ -{\alpha\E(V_t^{(d+1)/2}) \over \E^{d+1}(V_t^{1/2})}
   + {\E(V_t^{(d-1)/2})\E(V_t^{3/2}) \over \E^{d+2}(V_t^{1/2})}\right],
\end{equation}
which, for all $t\geq 0$, must be equal to a constant $\beta(\alpha)$.
This implies the following recursive relation for the moments of $V_t$: if we know that (\ref{E4}) is constant for $d=k$ and $d=4$, then it is also constant for $d=k+2$. Thus, it is sufficient to show that (\ref{E4}) is constant for $d = 3$ and $d = 4$, or $d = 3$ and $d = 5$.

Let us now consider case a), where the copula $C$ is a simplified PCC in $d\geq4$. From the three-dimensional marginal distributions, we obtain that (\ref{E4}) is constant for $d=3$. By conditioning on $X_4=x_4, X_3=x_3$, we conclude with a similar calculation as for (\ref{E2}) that
$$ \frac{E[V_t^2] \cdot E^2[V_t^{1/2}]}{E^{3}[V_t]} = const. $$
This, together with (\ref{E4}) being constant for $d=3$ implies that (\ref{E4}) is constant for $d=5$ and thus for all $d\geq 3$.

In case b), where the copula is a simplified PCC for all $\Sigma$, we know that for a three-dimensional marginal distribution, (\ref{F3}) holds for all $\alpha \geq 0$:
 $$-{\alpha\E[\{W^*(x_3)\}^{4/2}] \over \E^{4}[{W^*}^{1/2}(x_3)]}
   + {\E[\{W^*(x_3)\}^{2/2}]\E[\{W^*(x_3)\}^{3/2}] \over 
   \E^{5}[{W^*}^{1/2}(x_3)]}=\beta(\alpha).$$
Thus,  $\E[V_t^{2}]/\E^{4}[V_t^{1/2}]$ and $\E[V_t]\E[V_t^{3/2}] / \E^{5}[V_t^{1/2}]$ must be constants over $t$.
Together with (\ref{E4}) being constant in $t$ for $d=3$, this means that (\ref{E4}) is constant for $d=4$ and thus for all $d\geq 3$.
Note that, more precisely, we only require two different values of $\alpha$ in (\ref{F3}) for the argument above.
 
Summing up, we obtain that with $m=(d-1)/2$ the moments of $V_t$ are connected to the moments of $V$ via
$$\E[V_t^m]=a^m(t)\E[V^m]$$ for all $t>0$, $m=1,1.5,2,2,5,\ldots$,
where $a(t)=\E^2[V_t^{1/2}]/\E^2[V^{1/2}]$.

With all of the positive integer moments of $V$ and $V_t$ existing, the Laplace transforms of
$V$ and $V_t$, for $0\le s\le s_t$, where the constant $s_t$ may depend on $t$, can be written as
  $$\varphi_V(s)=1+\sum_{i=1}^\oo (-1)^i \E[V^i]s^i/i!,$$
  $$\varphi_{V_t}(s)=1+\sum_{i=1}^\oo (-1)^i \E[V^i_t]s^i/i!
  = 1+\sum_{i=1}^\oo (-1)^i \E[V^i]s^ia^i(t)/i!.  $$
Hence $\varphi_{V_t}(s)=\varphi_{V}(s a(t))$ in a neighborhood of 0
for the Taylor series expansion of the LTs about 0.
By \citeN[Section VII.6]{Feller}, the Taylor series in a positive neighborhood of 0 uniquely
determines the distribution. Hence $V_t=a(t)V$ for $t>0$.
The combination that Laplace transform tilting of the density leads to a scale-changed
random variable, implies that $V$ has a gamma density \cite[p. 576, Theorem 18.B.6]{MarshallOlkin}.
Hence, also $W$ is Gamma distributed, and the corresponding scale mixture is the multivariate t-distribution. \hfill $\square$

\section{Trivariate Marshall-Olkin (MO) copula}\label{section:mo}
To determine the bivariate conditional distributions of a three-dimensional MO copula, we work with the parameterization of \citeN[Chapter 3]{mai2012}. The three-dimensional MO copula is the survival copula of the rvs defined as
\begin{equation*}
\begin{split}
&X_1:=\min \Big\{ E_1, E_{12}, E_{13}, E_{123} \Big\}\\
&X_2:=\min \Big\{ E_2, E_{12}, E_{23}, E_{123} \Big\}\\
&X_3:=\min \Big\{ E_3, E_{13}, E_{23}, E_{123} \Big\},
\end{split}
\end{equation*}
where $E_I , I\subset \{1,2,3\}$ are independent and exponentially
distributed. For simplicity let us assume that all rate parameters are
equal, i.e., $\lambda_I = \lambda$, $\forall I\subset \{1,2,3\}$. This implies that $P(X_3=E_I\vert X_3=x_3)=1/4$ for all $E_I , I\subset \{1,2,3\}$, and thus we can determine the conditional survival distribution of $X_1,X_2\vert X_3=x_3$ as
\begin{equation*}
\begin{split}
\bar F(x_1,x_2\vert x_3)&=P(X_1 >x_1, X_2 > x_2 \vert X_3=x_3)\\
&=\sum_{3\in I \vert I\subset \{1,2,3\}} P(X_1 >x_1, X_2 > x_2 \vert X_3=x_3, X_3=E_I) \cdot P(X_3=E_I\vert X_3=x_3)\\
&=\frac{1}{4} \cdot \sum_{3\in I \vert I\subset \{1,2,3\}} P(X_1 >x_1, X_2 > x_2 \vert X_3=x_3, X_3=E_I).
\end{split}
\end{equation*}
If $X_3=x_3$ and $X_3=E_{123}$, we know that $x_1,x_2\geq E_{123} \geq E_{23},E_{13}$. Using this,
\begin{equation*}
\begin{split}
P(X_1 >x_1&, X_2 > x_2 \vert X_3=x_3, X_3=E_{123})\\
&  =\mathbbmss{1}_{x_1,x_2 \leq x_3} P\big(E_1 > x_1, E_2 >x_2, E_{12} > \max(x_1,x_2)\big), \\
\end{split}
\end{equation*}
since $X_2 > x_2$ implies $E_2,E_{12} > x_2$ and $X_1 > x_1$ implies $E_1,E_{12} > x_1$, which further yields $E_{12} > \max(x_1,x_2)$.

Let $X_{\{X>y\}}$ be the random variable with cdf $P(X\leq x \vert X>y)$, then we obtain
\begin{equation*}
\begin{split}
P(X_1 >x_1&, X_2 > x_2 \vert X_3=x_3, X_3=E_{23})\\
&  =\mathbbmss{1}_{x_2 \leq x_3} P\big(E_1 > x_1, E_2 >x_2, E_{12} > \max(x_1,x_2), {E_{123}}_{\{ E_{123} > x_3\}} > x_1,\\
 & \ \ \ \  {E_{13}}_{ \{ E_{13} > x_3\}} > x_1 \big), \\
P(X_1 >x_1 &, X_2 > x_2 \vert X_3=x_3, X_3=E_{3})\\
&  =P\big(E_1 > x_1, E_2 >x_2, E_{12} > \max(x_1,x_2), {E_{23}}_{\{ E_{23} > x_3\} } > x_2 ,\\
& \ \ \ \ {E_{13}}_{ \{ E_{13} > x_3\}}> x_1, {E_{123}}_{ \{ E_{123} > x_3 \} } > \max(x_1,x_2)\big), \\
P(X_1 >x_1 &, X_2 > x_2 \vert X_3=x_3, X_3=E_{13})\\
&  =\mathbbmss{1}_{x_1 \leq x_3} P\big(E_1 > x_1, E_2 >x_2, E_{12} > \max(x_1,x_2), {E_{123}}_{\{ E_{123} > x_3\} } > x_2  ,\\
 & \ \ \ \  {E_{23}}_{\{ E_{23} > x_3\}} > x_2 \big).
\end{split}
\end{equation*}
Putting together the conditional probabilities, the conditional survival function is
\begin{equation*}
\begin{split}
\bar F_{12\vert 3}(x_1,x_2\vert x_3)= &\frac{1}{4} e^{-\lambda(x_1+x_2+\max(x_1,x_2))}\\
\cdot \Big[ & \mathbbmss{1}_{x_1,x_2 \leq x_3} \cdot 4 + \mathbbmss{1}_{x_1,x_2 > x_3} e^{-\lambda(x_1+x_2+\max(x_1,x_2) - 3 x_3})\\ 
&+\mathbbmss{1}_{x_1 \leq x_3, x_2 > x_3}\cdot 2\cdot e^{-2\lambda (x_2-x_3)} 
+\mathbbmss{1}_{x_2 \leq x_3, x_1 > x_3}\cdot 2\cdot e^{-2\lambda (x_1-x_3)} \Big].
\end{split}
\end{equation*}
For the generalized inverse of the conditional survival function of e.g. $X_1\vert X_3=x_3$, this implies that
\begin{equation*}
\begin{split}
\bar F^{-1}(v \vert x_3)=\begin{cases}
-\ln(v) / (2\lambda) & v > e^{-2\lambda x_3}\\
x_3 &  e^{-2\lambda x_3}/2<v\leq e^{-2\lambda x_3}\\
[x_3 - \ln(2v)/(2\lambda)]/2 & v \leq e^{-2\lambda x_3}/2
\end{cases}
\end{split}
\end{equation*}
 Given this, we can evaluate the copula of the bivariate conditional distribution of $X_1,X_2\vert X_3=x_3$ to see that it depends on the value of $x_3$ in the areas where it is uniquely determined. To arrive at the conditional copula of the trivariate MO copula, and not of the corresponding distribution, we further have to transform $x_3$.
For the marginal survival function we obtain that $\bar F_3(x_3)=e^{-4\lambda x_3}$, with inverse $\bar F_3^{-1}(v)=-\log(v)/(4\lambda)$.

\section*{Acknowledgements}
Jakob St\"ober gratefully acknowledges financial support by TUM's Topmath program and a research stipend from Allianz Deutschland AG, while  Harry Joe is supported by an NSERC Discovery grant.
\bibliography{references}
\end{document}